\documentclass[conference]{IEEEtran}
%

\pagestyle{plain}
\usepackage{algorithm}
\usepackage{algpseudocode}
\usepackage{algorithmicx}
\usepackage{amsmath}
\usepackage{amsthm}
\usepackage{amssymb}

\usepackage{ulem}

\newtheorem{thm}{Theorem}

\newtheorem{prop}{Proposition}

\newtheorem{rmk}{Remark}

\usepackage{tikz}
\usepackage{amsmath}

\ifCLASSINFOpdf
\else
\fi
\hyphenation{op-tical net-works semi-conduc-tor}

\begin{document}
%
\title{AnonPSI: An Anonymity Assessment Framework
for PSI}

\author{\IEEEauthorblockN{Bo Jiang}
\IEEEauthorblockA{TikTok Inc.\\
bojiang@tiktok.com}
\and
\IEEEauthorblockN{Jian Du}
\IEEEauthorblockA{TikTok Inc.\\
jian.du@tiktok.com}
\and
\IEEEauthorblockN{Qiang Yan}
\IEEEauthorblockA{TikTok Inc.\\
yanqiang.mr@tiktok.com}}


%


\IEEEoverridecommandlockouts
\makeatletter\def\@IEEEpubidpullup{6.5\baselineskip}\makeatother
\IEEEpubid{\parbox{\columnwidth}{
    Network and Distributed System Security (NDSS) Symposium 2024\\
    26 February - 1 March 2024, San Diego, CA, USA\\
    ISBN 1-891562-93-2\\
    https://dx.doi.org/10.14722/ndss.2024.241279\\
    www.ndss-symposium.org
}
\hspace{\columnsep}\makebox[\columnwidth]{}}

\maketitle

\begin{abstract}

Private Set Intersection (PSI) is a widely used protocol that enables two parties to securely compute a function over the intersected part of their shared datasets and has been a significant research focus over the years. However, recent studies have highlighted its vulnerability to Set Membership Inference Attacks (SMIA), where an adversary might deduce an individual's membership by invoking multiple PSI protocols. This presents a considerable risk, even in the most stringent versions of PSI, which only return the cardinality of the intersection. This paper explores the evaluation of anonymity within the PSI context.
Initially, we highlight the reasons why existing works fall short in measuring privacy leakage, and subsequently propose two attack strategies that address these deficiencies. Furthermore, we provide theoretical guarantees on the performance of our proposed methods.
In addition to these, we illustrate how the integration of auxiliary information, such as  {the sum of payloads associated with members of the intersection (PSI-SUM),} can enhance attack efficiency.
We conducted a comprehensive performance evaluation of various attack strategies proposed utilizing two real datasets. Our findings indicate that the methods we propose markedly enhance attack efficiency when contrasted with previous research endeavors.  {The effective attacking implies that depending solely on existing PSI protocols may not provide an adequate level of privacy assurance.  It is recommended to combine privacy-enhancing technologies synergistically to enhance privacy protection even further.}
\end{abstract}

\section{Introduction}

Private Set Intersection (PSI) protocols allow two parties to securely compute a function over the intersection of their datasets without directly revealing the intersection itself   \cite{PSI_agrawal,PSI_de,PSI_fenske} and have gained significant interest in the industry. Sensitive information within the intersection, such as individual identities, attributes, and memberships at the other party, remains unrevealed.

 {While the intersection set is not directly disclosed, most two-party secure computation protocols disclose the intersection size may inadvertently leak membership information \cite{281334}.}
PSI-CA (Private Set Intersection-Cardinality) is one such protocol that releases the cardinality of the intersection to one party  \cite{PSI_CA_Freedman,jofc-2016-28434}. This protocol serves as an essential building block for various applications, such as new friend recommendations in online social networks \cite{10.1145/3267323.3268965} and contagious disease tracking \cite{10.1007/978-3-030-64840-4_29}. Another popular PSI protocol is PSI-SUM \cite{PSI_SUM_Google, cryptoeprint:2020/385}. In addition to the cardinality of the intersection,  {PSI-SUM also reveals the summation of payloads associated with members of the intersection.  One application of PSI-SUM is the secure computation for advertising measurement between an advertiser and an ad provider.
Similar PSI protocols including  Private-ID \cite{cryptoeprint:2020/599} reveal the intersection size. } 



Recent research indicates that even the disclosure of the intersection cardinality can introduce vulnerabilities \cite{281334}. Intuitively, cardinality is not independent of the identities of individuals in the intersected set, especially when the size is small. Therefore, these intersection size revealing protocols enable an adversary to infer whether a targeted individual is in the intersection or not, which is also known as the membership information. Notably, individuals who intersect with the other party's set are classified as positive members, while those outside the intersection are classified as negative members. Depending on the application context, either or both of these positive/negative memberships can be sensitive from a privacy perspective as shown below.
 {Consider, for example, a situation involving COVID-19 testing. Individuals who have tested positive are treated as sensitive information. Another scenario arises when politicians seek to advance their positions and persuade people to support a bill. They may examine existing datasets to identify those who abstained from voting. In this context, these non-voting individuals, representing negative members, are deemed sensitive.
In a third scenario, such as advertising measurement \cite{PSI_SUM_Google, cryptoeprint:2020/385}, the advertising company would be interested in both positive and negative memberships, which refer to an advertising conversion and non-advertising conversion, respectively. The motivation to determine user memberships serves two primary purposes: firstly, to tally positive members for the targeted advertising campaign, and secondly, to engage with negative members through a distinct advertising campaign.}

A more efficient attacker can even deduce a subset of individuals' membership through a small number of protocol invocations. Such an attack is also known as Set Membership Inference Attack (SMIA). In certain scenarios, the inference of such information may lead to unintended consequences, such as discrimination, targeted advertising, or even social engineering attacks. 
An SMIA is initiated by the attacker being one party participating in the PSI running protocol, through the continual submission of varying input sets. Upon observing the outputs, the adversary can infer the membership information of targeted elements. 

On the other hand, an effective SMIA is useful in measuring the privacy protection guarantee provided by different PSI protocols and providing guidance in designing better and stricter protocols. 
The effectiveness of the SMIA is typically determined by the number of individuals' memberships inferred. Beyond the attack efficiency, to assess the protection provided by PSI, the attack algorithm should also be feasible and valid under certain counterattack measures embedded in the PSI protocol.
For instance, many companies set daily limits for database queries from the public. For example, it allows only 15 queries per day \cite{Rogers2020AMF}. Consequently, a feasible attack should be designed to be sufficiently efficient with a limited number of protocol invocation times.
Moreover, strict PSI protocols are often fortified with privacy-preserving mechanisms, such as noise-adding mechanisms based on differential privacy \cite{Dwork2006,Dwork2008}. The attack should continue to function effectively against such noise with high confidence.

In this paper, our principal contributions can be summarized in three key areas:

\begin{itemize}
    \item We introduce two strategies for implementing membership inference attacks: deterministic and statistical, which take individuals' membership as deterministic values and random variables respectively. Efficient attack algorithms are proposed under each strategy, and we theoretically illustrate the performance guarantee of each (performance lower bound). 
    \item We exhibit the potential of employing auxiliary information to further augment the attack's effectiveness. In this regard, an efficient algorithm specifically tailored for PSI-SUM attacks is proposed  {for the first time.} We amalgamate the attack algorithm in PSI-CA with an offline summation matching algorithm, N-SUM, thereby significantly reducing combination possibilities and enhancing attack efficiency.
    \item We assess the proposed algorithms using two real-world datasets: COVID-tracking systems and TaoBao advertisements. We rigorously test the performance and effectiveness of these algorithms against these datasets, substantiating their practical applicability.  {This further implies that the PSI protocol alone may not ensure adequate privacy. It's advisable to integrate other privacy-enhancing technologies for strengthened privacy safeguards.}
\end{itemize}


\section {Related works}

The concept of Membership Inference Attack (MIA) was introduced by Shokri et al. in 2017  \cite{7958568}, marking a significant milestone in the field of machine learning security. In an MIA, attackers leverage machine learning model outputs such as prediction scores or confidence intervals to deduce whether a specific data point was part of the training set. Intriguingly, these attackers don't require direct access to the model's inner workings or training data; the model's query responses suffice, making these attacks practicable in real-world scenarios where models are often deployed as services. Nasr et al.  \cite{8835245} extended these attacks to collaborative deep learning models, underscoring potential risks in the burgeoning field of federated learning. Theoretical frameworks for comprehending and counteracting these attacks were furnished by Yeom et al.  \cite{8429311}. It's noteworthy, however, that MIAs typically infer a single individual's membership through one observation, while SMIA reasons multiple individual memberships through a sequence of observations. MIAs generally train dedicated models to emulate the target model's behavior, while in SMIA, the attacker alters the input for each protocol call and interprets the results from observations. Defense mechanisms against MIAs, such as adding noise to the model's outputs  \cite{Huang2020InstaHideIS} or employing differential privacy mechanisms  \cite{10.1145/2976749.2978318}, remain viable for SMIA.

A distinct but related line of research pertains to linkage attacks, which attempt to re-identify anonymized data by correlating it with other accessible datasets. Introduced by Latanya Sweeney in 2000  \cite{37926}, linkage attacks gained notoriety when Sweeney demonstrated the possibility of re-identifying individuals in anonymized medical data using publicly available voter registration lists. The rise of big data and computational power has amplified the prominence of linkage attacks. Notably, in 2006, Narayanan and Shmatikov  \cite{4531148} de-anonymized Netflix's movie rating data by linking it with publicly available IMDb ratings. The emerging ubiquity of linkage attacks has underscored the need for robust anonymization techniques and highlighted the limitations of approaches like k-anonymity  \cite{k-anon}. SMIA in contexts such as PSI bears similarities to linkage attacks, as it makes inferences using data from a dependent dataset. However, while linkage attacks typically focus on re-identifying individuals, SMIAs target membership information. In this paper, we present an attack algorithm for PSI-SUM that leverages side information, akin to a linkage attack.

Recently, Guo et al.  in  \cite{281334} proposed an attack that enhances the basic ``toy attack" (submit one element at a time to the PSI protocol, and observe the membership) efficiency from $\mathcal{O}(N)$ to $\mathcal{O}(\log{N})$ (protocol invocation times to infer $N$ individuals' membership). Nevertheless, this improved attack possesses several limitations:
\begin{enumerate}
    \item PSI protocols often cap invocation frequencies within a specific number, necessitating highly efficient iteration-based attacks to garner enough information during the limited protocol calls. The method outlined in  \cite{281334} is inapplicable when the allowable protocol call budget is constrained.
    \item When the PSI incorporates privacy-preserving mechanisms, such as differential privacy, the method in  \cite{281334} becomes impracticable due to its inability to cope with the resultant uncertainty in the attack process.
    \item The aforementioned method focuses on one PSI scenario (PSI-CA), disregarding the side information that other applications like PSI-SUM can provide. This side information could prove instrumental during an attack.
In light of these limitations, our work aims to devise a more effective and efficient attack strategy that not only considers multiple PSI scenarios but is also capable of handling the uncertainties introduced by privacy-preserving mechanisms. 
\end{enumerate}





\section{Preliminaries}

To provide a formal representation of the SMIA for PSI protocols, we define the notations, attack model, and parties involved in the PSI protocol execution. Then we introduce some baseline algorithms for SMIA.


\subsection{Problem formulation}
In a two-party setting, let us represent the dataset of Party 1 as $\mathbb{X} = \{x_1, ..., x_{|\mathbb X|}\}$ and the dataset of Party 2 as $\mathbb{Y} = \{y_1, ..., y_{|\mathbb Y|}\}$. Here,  {we use $|\cdot|$ to denote} the cardinality of a dataset. For instance, $|\mathbb{X}|$ signifies the total number of elements in Party 1's dataset. The cardinality of the intersection between the two datasets is represented by $|\mathbb{X} \cap \mathbb{Y}|$.


 In the context of intersection-size-revealing protocols, the party that receives the intersection size obtains a measure of similarity between its own dataset and the other party's dataset. Given that the party has the freedom to select its own dataset for the protocol, it can strategically assess the other party's dataset according to its interests. A relevant question arises when the party has a multitude of target elements: Can the party determine the membership status of these target elements in the other party's dataset by taking advantage of its ability to measure similarity?

 \textbf{Threat Model:}  We assume the attacker is from Party 1, i.e., $\mathbb X$ belongs to the attacker, and the attacker possesses the following capabilities.
The attacker can participate in multiple intersection-size-revealing protocols as a party. During each protocol execution, the attacker can select its input and obtain the intersection size (and some side information if available, such as the summation in PSI-SUM) resulting from its input and $\mathbb {Y}$. In practical scenarios, the number of times the protocol can be called might be limited due to time constraints or rate-limiting mechanisms. Therefore,  it is assumed that the attacker is permitted to repeatedly engage in protocol invocations under a query budget $\tau$.

Depending on different purposes, the attacker may be interested in different types of membership information, as we discussed in the introduction. Here we assume the adversary is interested in both the positive members and the negative members. To this end, the adversary adaptively designs his attack strategy to maximize membership information leakage.


\textbf{Attacking Strategy: } We consider two types of attack strategies: 
\begin{itemize}
    \item Deterministic attack, where the adversary treats each person's membership as a deterministic value, and his attack is based on narrowing down group size iteratively and eventually landing on selected subsets which incurs a return with a result equal to $0$ (all negative) or the size of the subset (all positive). The attack may re-identify only a small subset of the victim set under small $\tau$, but the inferred memberships are accurate and deterministic.
    \item Statistical attack, where the attacker takes each individual's membership as a binary random variable. His attack is based on updating the posterior belief on these random variables to either maximize (reasoning positive membership) or minimize (reasoning negative membership) them until the stopping criterion is matched. This type of attack strategy will make a guess of each membership in the dataset, and his strategy will guarantee his guessing accuracy above a tolerable threshold.
\end{itemize}
\begin{figure}[t]
\begin{small}
\centering 
{ \includegraphics[width=0.4\textwidth]{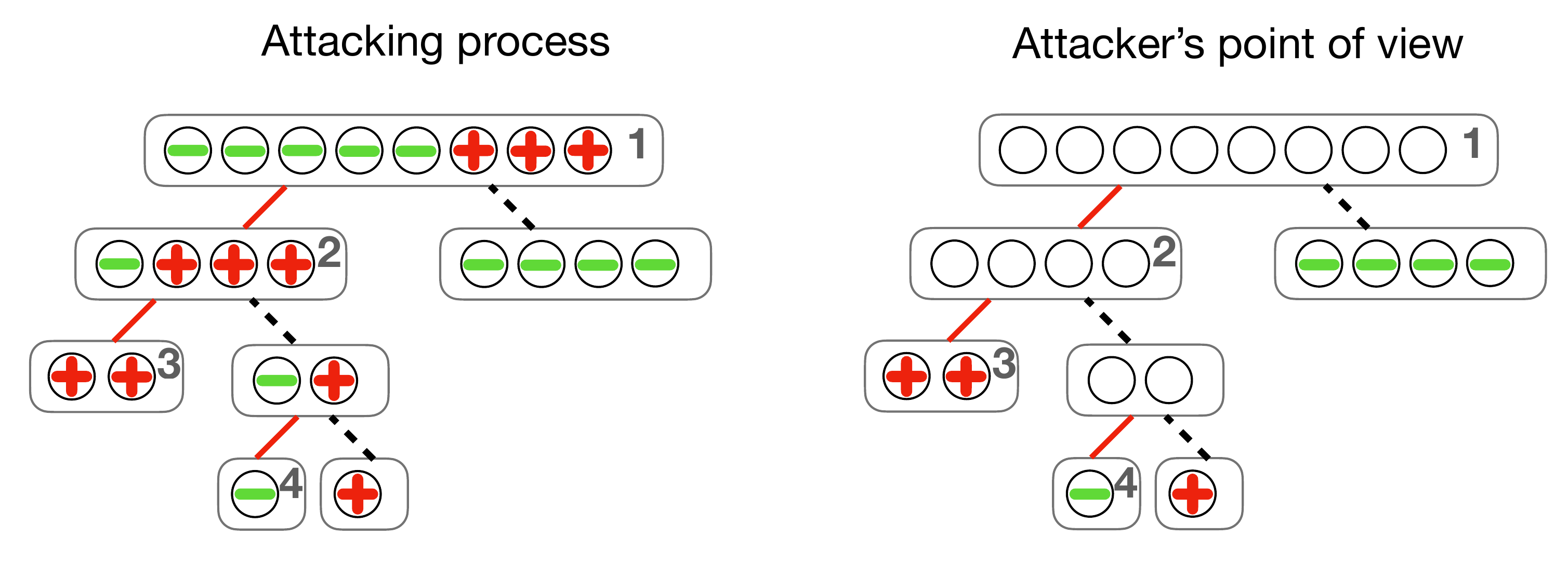} } 
\caption{An Illustration of deterministic hierarchical set membership inference attack.  { We use "+" to denote positive memberships, i.e., individuals in the intersection, and "-" to denote negative memberships, i.e., individuals in the attacker' set only.}} 
\label{det_att} 
\end{small}
\end{figure}
 \subsection{Preliminary attack algorithms}

A brute-force attack involves the attacker randomly selecting an individual from $\mathbb X$ set and checking their membership by invoking the protocol. The attacker then moves on to another individual in $\mathbb X$. To determine the membership of $|\mathbb{X}|$ individuals, the attacker must make $|\mathbb{X}|$ protocol calls. This approach is inefficient for two reasons: first, it requires a large number of calls; and second, it can be easily countered by incorporating some protocol output policies. Such as Apple's PSI-AD protocol, which only returns an intersection size greater than a certain threshold.

The algorithm proposed in  \cite{281334} introduces an attacker that uses a strategy similar to binary search. It constructs a binary tree with target elements at the root, and each non-leaf node's set is divided into two non-empty, disjoint subsets stored in its child nodes. Here node denotes a subset of $\mathbb X$, and a child node is a subset of its parent node. The attacker then performs a depth-first search (DFS) on the tree, invoking the protocol with the victim to get the intersection size for each visited node's set. The search ends at a node where the stored set's size equals the received intersection size. Unvisited subtrees are queued for future DFS.

The attack in  \cite{281334} is also optimized by adopting a greedy DFS approach. The attacker should prioritize searching subtrees with the highest merge probability. This probability is positively related to the ratio of the intersection size of the subtree's root to the number of target elements in the subtree. The attacker can use this ratio as a priority score to sort the subtrees in a priority queue. By doing so, the attacker will first run DFS on the subtree with the highest merge probability, further improving the efficiency of the attack. An illustrative example of this algorithm is shown in Fig. \ref{det_att}.

The attack in  \cite{281334} is more efficient than the toy attack due to its binary-search-like strategy. By constructing a binary tree, the attacker can save at least half of the protocol invocations for every non-root layer. This is because after visiting a child node, the return of the other child node can be deduced from the parent node and the visited child node. The efficiency of the baseline attack is no worse than the toy attack, and it can even be better since it may terminate early when the current node contains all positive or negative members, reducing the number of invocations.

\section{Deterministic attack}

In this section, we propose attacks for the PSI protocols treating each individual's membership as deterministic values. We first illustrate with an example showing the limitations of the aforementioned works. Then we present our DyPathBlazer, a dynamic programming approach for PSI-CA protocol. At the end of this section, we show how  {auxiliary information in the output}  improves attack efficiency and propose TreeSumExploerer, an efficient attack algorithm for PSI-SUM.


\subsection{Even v.s. uneven tree partition}

\begin{figure}[t]
\begin{small}
\centering 
{ \includegraphics[width=0.45\textwidth]{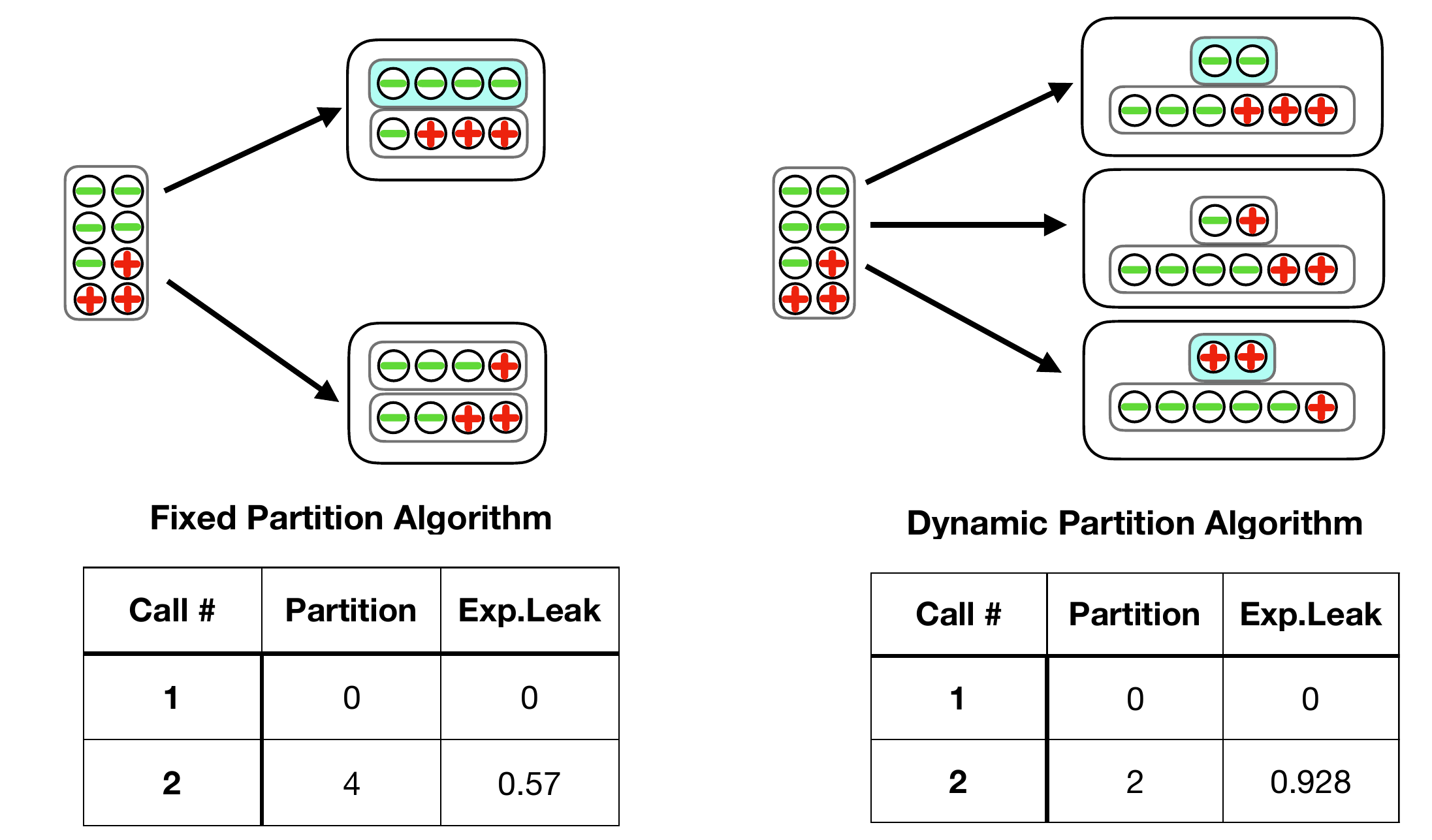} 
\label{fig:exp} } 
\caption{  {A comparison of different tree partition principles in hierarchical attacks on PSI-CA protocols.}} 
\label{exmp} 
\end{small}
\end{figure}
One possible direction to optimize the hierarchical structural attack is to adopt an uneven tree partition principle rather than divide all nodes evenly. We provide more insights through the following example.

Suppose an attacker possesses a dataset containing eight individuals, three of whom also are positive members and the rest of them are negative members. The adversary can request a PSI call twice. Our exploration focuses on two distinct attack strategies: fixed partitioning and dynamic partitioning.

In the fixed partition strategy, the adversary consistently bisects the tree, employing one of the child nodes for the next round. As shown in the figure, the attacker employing the fixed partition strategy initially executes a PSI run, identifying three positive members in their input but failing to deduce their identities. Then, the attacker randomly selects four individuals from the $\mathbb X$ for the second PSI run. As depicted in the figure, two potential scenarios arise:
The first scenario, with a probability of $\binom{5}{4}/\binom{8}{4}$, allows the attacker to discern the identities of four individuals.
The second scenario unfolds with a probability of $\binom{5}{3}\binom{3}{1}/\binom{8}{4}$, in which the attacker cannot infer any identities.
Following two PSI calls, the expected number of inferred memberships is $0.57$.
In contrast, the dynamic partitioning strategy involves an algorithm where the attacker can unevenly split the current branch, which then serves as the input for the next PSI call. The rationale is that the attacker, understanding the constraints of their protocol run time, opts to thoroughly identify a subset of individuals instead of selecting a larger quantity and failing to determine all their memberships. Case 2 from Figure \ref{exmp} illustrates the attack process using the dynamic partition strategy.

In this process, the attacker initiates a PSI run, identifies three individuals as positive members, and then randomly chooses two individuals for the second round. This can result in three potential outcomes:
Both selected individuals are negative members. This scenario, with a probability of $\binom{5}{2}/\binom{8}{2}$, results in a leakage of $2$.
The two individuals include one positive and one negative member, yielding no leakage.
Both selected individuals are negative. This outcome, with a probability of $\binom{3}{2}/\binom{8}{2}$, results in a leakage of $2$.
The expected leakage after two PSI calls is $0.928$, which exceeds that of the fixed partition scenario.

This raises a pertinent question: Given that the dynamic partition algorithm can potentially enhance inference efficiency, how can the adversary determine the optimal partition factor (for example, 2 in the example above)? In the following section, we introduce DyPathBlazer, a dynamic programming solution designed to resolve this optimal partition factor issue.



\subsection{DyPathBlazer: A bottom-up dynamic programming solution }\label{sec:4.1}

The proposed algorithm is depicted in Alg.\ref{alg:cap1} and can be summarized as follows. The attacker requests a PSI call for $\mathbb X$ and observes a return $C_{\mathbb X}$, the PSI call budget is also updated to $\tau -1$. These together determine an optimal partition factor $K$ (the determining function will be introduced later). He then randomly divides the tree into two child nodes. One with $K$ elements (denoted as left child), one with $|\mathbb X| -K$ elements (denoted as right child). He submits a child with a shorter length to the PSI protocol, the left child for example, and observes an outcome $C_{L}$. Then $C_{R} = C_{\mathbb X} - C_{L}$. The next step is to compare the expected leakage of the left child and the right child. The child node with a larger expected leakage is treated as the parent node and waiting for $K$ in the next iteration. The child node with smaller expected leakage is pushed into a priority queue. If all elements' membership in the current node is determined (all positive or all negative), the algorithm makes a prediction and dequeues a node with the highest priority as the current node. This process goes on until the protocol call budget is exhausted or all elements' membership in $\mathbb X$ are determined.

The selection of $K$ is dependent on three parameters: the number of elements in the parent node $|\mathbb N|$, the number of positive members of the elements $C_{\mathbb N}$, and the protocol run budget $\tau$. We define the tuple containing these three factors as a \textit{state}: $(|\mathbb N|,C_{\mathbb N},\tau)$.

\begin{algorithm}
\caption{DyPathBlazer}\label{alg:cap1}
\hspace*{\algorithmicindent} \textbf{Input:} A set $\mathbb X$ of target elements, pre-calculated Intermediate results $\Theta$, protocol invocation times $\tau$\\
 \hspace*{\algorithmicindent} \textbf{Output:} Classified sets $\mathbb{Z}_{pos}$, $\mathbb{Z}_{neg}$.
\begin{algorithmic}
\item Initialize $\mathbb Z_{pos} = \varnothing$, $\mathbb Z_{neg} = \varnothing$
\item $C_{\mathbb{X}} \gets$ PSI-CA$(\mathbb{X},\mathbb{Y})$, $\tau = \tau -1$.
\item \textsf{queue} $\gets$ $\{(|\mathbb X|, C_{\mathbb{X}}, \mathbb{X})\}$
\While{\textsf{queue} is not empty}
\State $(C_{\mathbb{N}}, \mathbb{N}) \gets$ \textsf{queue}.dequeue
\While{$0<C_{\mathbb{N}}<|\mathbb{N}|$ and $\tau>0$}
\State $K$ = $\Theta(\mathbb{N}, C_{\mathbb{N}}, \tau)$
\State $\mathbb{N}_K^L\gets \mathbb{N}[:K]$, $\mathbb{N}_K^R\gets \mathbb{N}[K:]$
\State $C_L\gets$ PSI-CA$(\mathbb{N}_K^L, \mathbb{Y})$, $\tau \gets \tau - 1$
\State $C_R \gets C_{\mathbb{N}} - C_L$
\If {$\Gamma^L_K\ge \Gamma^R_K$}
\State $(C_{\mathbb {N}}, \mathbb{N})$ $\gets(C_L,\mathbb {N}^L_K)$
\State \textsf{queue}.enqueue $\{(\Gamma^R_K, C_R, \mathbb{N}^R_K)\}$
\Else
\State $(C_{\mathbb {N}}, \mathbb{N})$ $\gets (C_R,\mathbb {N}^R_K)$
\State \textsf{queue}.enqueue $\{(\Gamma^L_K, C_L, \mathbb{N}^L_K)\}$
\EndIf
\EndWhile
\If{$C_{\mathbb{N}} = |\mathbb{N}|$} $\mathbb{Z}_{pos}\gets \mathbb{Z}_{pos}\cup \mathbb{N}$
\ElsIf{$C_{\mathbb{N}} = 0$} $\mathbb{Z}_{neg}\gets \mathbb{Z}_{neg}\cup \mathbb{N}$
\EndIf
\EndWhile\\
\Return $\mathbb Z_{pos}$, $\mathbb Z_{neg}$
\end{algorithmic}
\end{algorithm}

\begin{algorithm}
\caption{Memo generating function}\label{alg:memo}
 \hspace*{\algorithmicindent} \textbf{Output:} Memorized space $\Gamma$ and $\Phi$.
\begin{algorithmic}
\item Initialize $\Gamma$ and $\Phi$ as hash tables, where the key of $\Gamma$ is $(\mathbb N,C_{\mathbb{N}},\tau)$, and the key of $\Phi$ is $(\mathbb{N},C_{\mathbb{N}})$
\item $\Gamma$ (1,1,0) = 1; $\Gamma$ (1,0,0)=1;
\item $\Phi$ (1,1) = 0; $\Phi$ (1,0) = 1;
\State \Return $\Gamma$, $\Phi$
\end{algorithmic}
\end{algorithm}


\begin{figure}[t]
\begin{small}
\centering 
{ \includegraphics[width=0.4\textwidth]{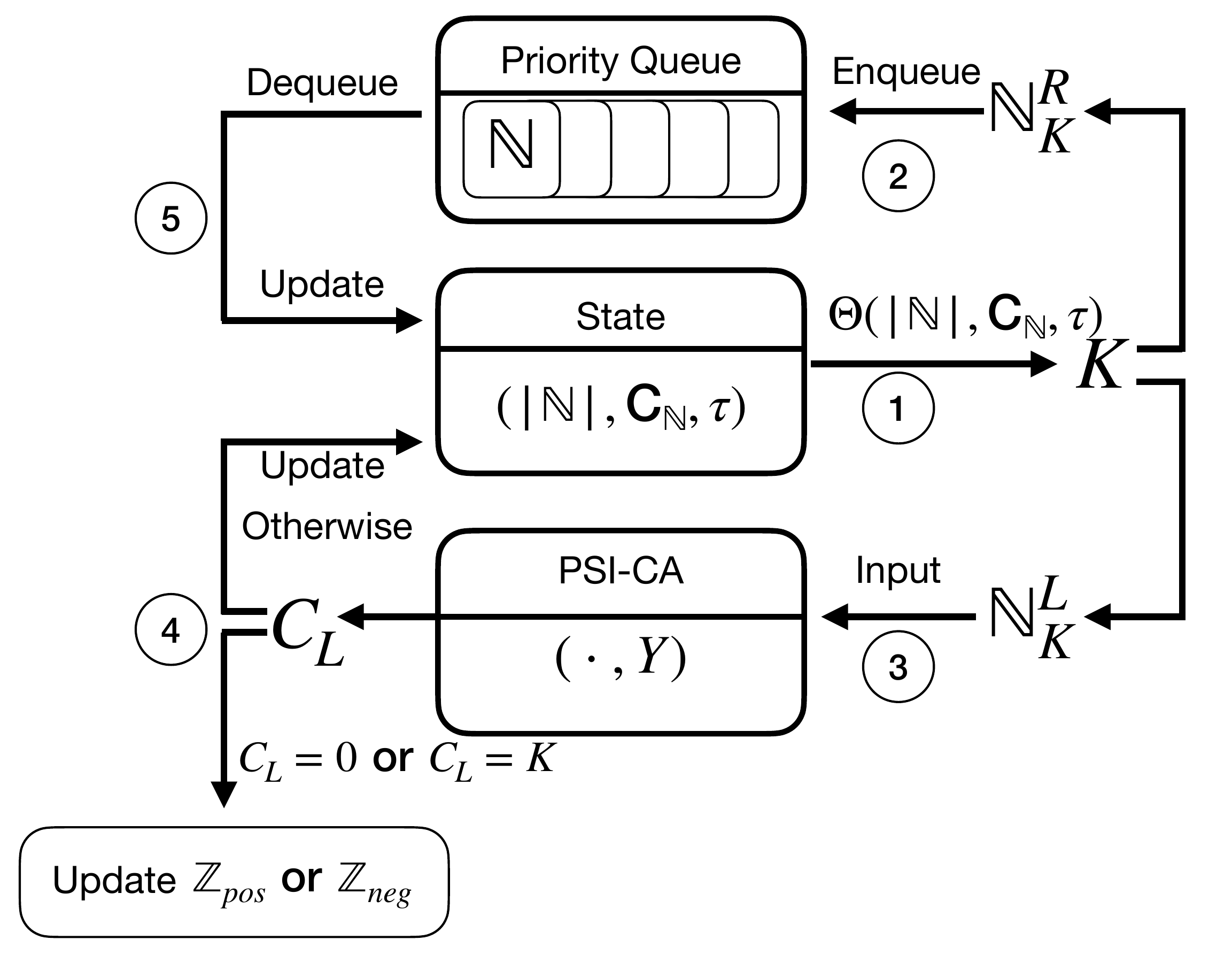} } 
\caption{DyPathBlazer model description } 
\label{det_att_model} 
\end{small}
\end{figure}

\begin{algorithm}
\caption{Optimal Tree partition algorithm based on dynamic programming}\label{alg:cap5}
\hspace*{\algorithmicindent} \textbf{Input:} Size of the victim set $N$, memberships in the victim set $m$, number of protocol innovation budget $\tau$\\
 \hspace*{\algorithmicindent} \textbf{Output:} $\Theta$, the optimal set partition principle.
\begin{algorithmic}
\State \Return $\Theta(\mathbb N,C_{\mathbb{N}},\tau)$ if exists.
\item Initiate $\Gamma(\mathbb N,C_{\mathbb{N}},\tau) = 0$, $\Theta(\mathbb N,C_{\mathbb{N}},\tau)=0$, $\Phi(\mathbb N,C_{\mathbb{N}})=N$
\For{1 $\le k \le (\mathbb{N}+1)//2$}
\State Calculate $\Gamma^{L}_k(\mathbb N,C_{\mathbb{N}},\tau)$ with \eqref{eq:left}
\State Calculate $\Gamma^{R}_k(\mathbb N,C_{\mathbb{N}},\tau)$ with \eqref{eq:right}
\State $\Gamma(\mathbb N,C_{\mathbb{N}},\tau-1)$ with \eqref{eq:findgm}
\EndFor
\State $\Theta(\mathbb N,C_{\mathbb{N}},\tau)=\text{argmax}_{k}\Gamma^k(\mathbb N,C_{\mathbb{N}},\tau)$
\If {$\Gamma(\mathbb N,C_{\mathbb{N}},\tau)\ge{\mathbb{N}}$}
\State $\Phi(\mathbb N,C_{\mathbb{N}})=\min\{\Phi(\mathbb N,C_{\mathbb{N}}),\tau\}$
\EndIf\\
\Return $\Theta(\mathbb N,C_{\mathbb{N}},\tau)$
\end{algorithmic}
\end{algorithm}

Note that different values of the partition factor $K$ may lead to different membership inference scenarios. Denote $\Gamma^k(|\mathbb N|,C_{\mathbb N},\tau)$ as the expected membership leakage when the partition factor is  $k$, given the current state.  Denote $\Theta(|\mathbb N|,C_{\mathbb N},\tau)=K$ as the partition rule, then:
\begin{equation}\label{eq:findk}
    K = \text{argmax}_{k}\Gamma_k(|\mathbb N|,C_{\mathbb N},\tau).
\end{equation}
We remove the subscription to denote the best-case expected leakage: $\Gamma(|\mathbb N|,C_{\mathbb N},\tau)=\Gamma_K(|\mathbb N|,C_{\mathbb N},\tau)$. In deriving the values of $\Gamma$, we also need to introduce another term $\Phi(|\mathbb N|,C_{\mathbb N})$, which denotes the smallest expected protocol run time the adversary needs to infer $\mathbb N$ target users' identity given $C_{\mathbb{N}}$ of them are positive memberships:
\begin{equation}
    \Phi(|\mathbb N|,C_{\mathbb N}) = \min_{\Gamma(|\mathbb N|,C_{\mathbb N},\tau) \ge N }{\tau}.
\end{equation}
After the partition, the current input set is divided into two parts with $K$ and $|\mathbb{N}|-K$ elements respectively. Denote $\mathbb {N}_K^L$ as the first set with $K$ elements, and $\mathbb {N}_K^R$ as the set with the other $|\mathbb{N}|-K$ elements.
Then the positive members in the first $K$ elements, PSI-CA result of $\mathbb{N}_K^L$ and $\mathbb Y$ can be denoted as a random variable $C_{L}$, with a probability distribution of:
\begin{equation}\label{eq:dist}
    \text{Pr}(C_{L}=c) = \frac{\binom{C_{\mathbb{N}}}{c}\binom{|\mathbb{N}-C_{\mathbb{N}}|}{K-c}}{\binom{\mathbb{N}}{K}}.
\end{equation}
We next show that $\Gamma_K(\mathbb N,C_{\mathbb {N}},\tau)$ can be derived by recursive functions:
After partition, the total leakage by choosing the left child becomes:
\begin{equation}\label{eq:left}
\begin{aligned}
    &\Gamma^L_K(\mathbb N,C_{\mathbb {N}},\tau) =\\ &\begin{cases}
E_{C_L}[\Gamma(K,C_{L},\tau-1)], \qquad\quad\qquad~~~~\text{if}~~ \Phi(K,C_{L}) \ge {\tau - 1},\\
K+E_{C_L}[\Gamma(|\mathbb N|-K,C_{\mathbb{N}}-C_{L},\tau-\Phi(K,C_{L}))], ~~ \text{otherwise}.
    \end{cases}
\end{aligned}
\end{equation}
Similarly,  the total leakage by choosing the right child becomes:
\begin{equation}\label{eq:right}
\begin{aligned}
    &\Gamma^R_K(\mathbb{N},C_{\mathbb{N}},\tau) =\\ &\begin{cases}
E_{C_L}[\Gamma(|\mathbb{N}|-K,C_{\mathbb{N}}-C_L,\tau-1)], \\~~~~~~~~~~~~~~~~~~~~~~~~~~~~\text{if}~~ \Phi(|\mathbb{N}|-K,C_{\mathbb{N}}-C_{L})\ge {\tau - 1},\\
|\mathbb{N}|-K+E_{C_L}[\Gamma(K,C_L,\tau-\Phi(|\mathbb{N}|-K,C_{\mathbb{N}}-C_{L}))], \\~~~~~~~~~~~~~~~~~~~~~~~~~~~~~~~~~~~~~~~~~~~~~~~~~~~~~~~~~~\text{otherwise}.
    \end{cases}
\end{aligned}
\end{equation}
Then,  
\begin{equation}\label{eq:findgm}
    \Gamma_K(|\mathbb{N}|,C_{\mathbb{N}},\tau)=\max\left\{\Gamma_K^L(|\mathbb{N}|,C_{\mathbb{N}},\tau),\Gamma_K^R(|\mathbb{N}|,C_{\mathbb{N}},\tau)\right\}.
\end{equation}
Intuitively, the expected membership leakage given of the current \textit{state} depends on the partition factor $K$. For a given $k$, the algorithm chooses from the left or the right child according to their corresponding expected leakage values: $\Gamma^L_K(|\mathbb{N}|,C_{\mathbb{N}},\tau)$ and $\Gamma^R_K(|\mathbb{N}|,C_{\mathbb{N}},\tau)$. The algorithm subsequently selects the branch with the maximum expected leakage value as the new input of the PSI, relegating the other part to the priority queue.
The processing framework is shown in Fig.\ref{det_att_model}.

Now suppose the left branch has a larger expected leakage than the right. After dividing $\mathbb{N}$ into $\mathbb{N}_K^L$ and $\mathbb{N}_K^R$, the algorithm calls the PSI-CA protocol and receives an observation of the cardinality of the intersection $C_L$. If $C_L=0$, none of the elements in the input $\mathbb{N}_K^L$ belongs to the intersection and $\mathbb{Z}_{neg}=\mathbb{Z}_{neg}\cup\mathbb{N}_K^L$. On the contrary, if $C_L = K$, all elements belong to the intersection and $\mathbb{Z}_{pos}=\mathbb{Z}_{pos}\cup\mathbb{N}_K^L$. Other than these two cases, the new $C_L$, together with $K$ and $\tau-1$ form a new state of the attack $(K,C_L,\tau-1)$. 

It is worth noting that once the expected membership leakage for the parent node is calculated, all the expected membership leakage of its child nodes are available, as they are the intermediate steps in calculating the parent node. i.e., to calculate $\Gamma(\mathbb{N},C_{\mathbb{N}},\tau)$, all $\Gamma(\mathbb{N}',C'_{\mathbb{N}},\tau')$ are required, where $1\le \mathbb{N}'\le\mathbb{N}$, $1\le C'_{\mathbb{N}}\le C_{\mathbb{N}}$ and $0<\tau'\le \tau$. The same rule also applies to the structure of $\Phi (\mathbb{N},C_{\mathbb{N}})$. When the algorithm is proceeding, any observations on the real number of positive memberships in the current node are included. Therefore, the policy space of $\Gamma$ only needs to be calculated once offline and can be directly used for further retrieval. This backtracking structure of building the memorization table of  $\Phi (\mathbb{N},C_{\mathbb{N}})$ in the dynamic programming solution is depicted in Fig. \ref{DP_backtrack}.

The following Theorem states the optimality of the proposed algorithm under the hierarchical attack structure.

\begin{thm}\label{thm:1}
    The proposed algorithm is optimal in maximizing the number of expected membership inferences under the hierarchical structure.
\end{thm}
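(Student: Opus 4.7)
The plan is to prove Theorem~\ref{thm:1} by backward induction on the invocation budget $\tau$, invoking Bellman's principle of optimality for the Markov decision process induced by the hierarchical attack. First I would establish that the tuple $(|\mathbb{N}|, C_{\mathbb{N}}, \tau)$ is a sufficient statistic for the remaining decision problem at any node. The key observation is that, conditional on knowing only that exactly $C_{\mathbb{N}}$ of the $|\mathbb{N}|$ elements in the current node are positive, every assignment of positive labels is equiprobable from the attacker's viewpoint, since individuals inside the node have not yet been separated by any query. Consequently, the law of the next observation $C_L$ given in \eqref{eq:dist} depends only on $K$ and $(|\mathbb{N}|, C_{\mathbb{N}})$, so any two nodes sharing this tuple have identical expected leakage-to-go distributions. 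This justifies collapsing the value function to $\Gamma(|\mathbb{N}|, C_{\mathbb{N}}, \tau)$ and the auxiliary function $\Phi(|\mathbb{N}|, C_{\mathbb{N}})$.

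I would then induct on $\tau$. For the base case $\tau=0$, no further queries are possible and only already-pure nodes ($C_{\mathbb{N}}\in\{0,|\mathbb{N}|\}$) contribute leakage, matching the algorithm's termination rule. For the inductive step, fix a state $(|\mathbb{N}|, C_{\mathbb{N}}, \tau)$ with $0<C_{\mathbb{N}}<|\mathbb{N}|$ and assume that $\Gamma(\cdot,\cdot,\tau')$ is optimal for every $\tau'<\tau$. Any hierarchical action at this node is parameterized by (i) a partition factor $K$ and (ii) a choice of which child to continue exploring versus defer to the queue. By the inductive hypothesis together with linearity of expectation, the quantities $\Gamma^L_K$ and $\Gamma^R_K$ in \eqref{eq:left}--\eqref{eq:right} equal the optimal continuation values of these two branch choices; the case split on $\Phi$ precisely encodes whether the residual budget $\tau-1$ suffices to exhaust the explored branch and, if so, how much budget remains for the deferred sibling. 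Equation \eqref{eq:findgm} together with the maximization in \eqref{eq:findk} thus realizes the Bellman-optimal action, completing the inductive step.

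The main obstacle is justifying the interplay between the per-branch greedy choice and the queue-based scheduling across sibling subtrees. Because the continuation problems at two sibling nodes are conditionally independent given their states, a crucial lemma is an \emph{exchange argument}: any interleaved schedule of queries over two sibling subtrees can be reordered to process them sequentially without changing the total expected leakage or violating the budget constraint. This reduction is what legitimizes committing $\Phi(K,C_L)$ units to the chosen branch and passing the residual $\tau-\Phi(K,C_L)$ to the sibling, rather than considering arbitrary interleavings. Once the exchange property is established, the Bellman recursion derived above immediately extends from the two-child local problem to the global policy over the entire tree and priority queue, yielding optimality among all hierarchical strategies.
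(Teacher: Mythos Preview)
Your approach is essentially the same as the paper's: both rest on Bellman's principle for the dynamic program indexed by $(|\mathbb{N}|, C_{\mathbb{N}}, \tau)$, with the paper's two-step sketch (decompose the state as an expectation over sub-branches for a fixed dividing value, then maximize over that value) corresponding exactly to your inductive step. Your proposal is considerably more fleshed out---making explicit the sufficient-statistic claim, the backward induction on $\tau$, and the exchange lemma for sibling scheduling---whereas the paper's proof is a two-sentence outline that leaves all of these points implicit.
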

\begin{proof}
The proof of Theorem \ref{thm:1} follows a simple concept containing two steps, in the first step, we show that under a given dividing value, the state of the membership inference attack can be decomposed into an average (expectation) of all possible sub-branches. Then in the second step, we show our algorithm always selects the optimal dividing value according to the maximum number of expectations of the inference.


\end{proof}

\begin{figure}[t]
\begin{small}
\centering 
{ \includegraphics[width=0.35\textwidth]{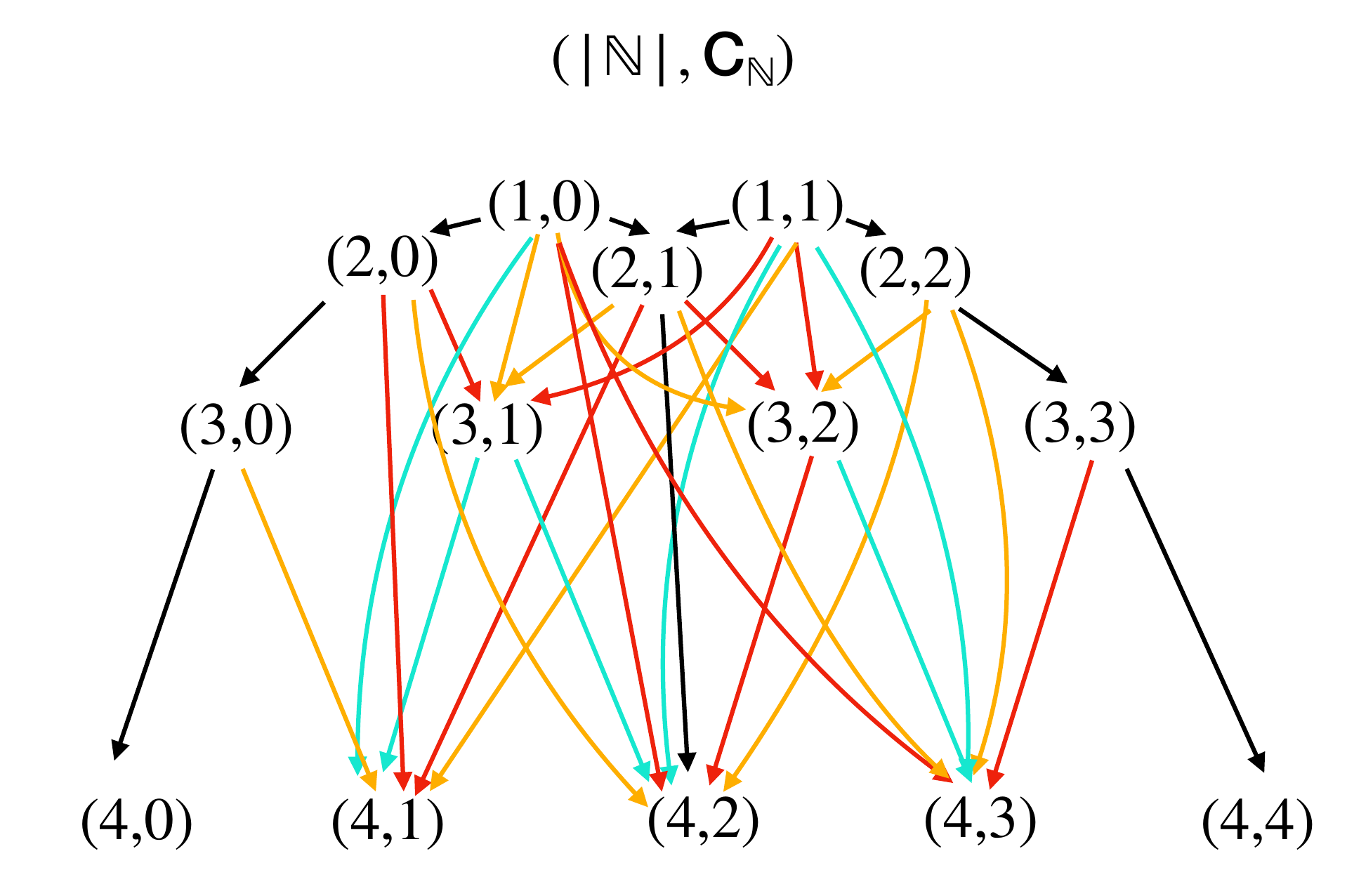} 
\caption{The backtracking structure of memorization table in the dynamic programming solution} 
\label{DP_backtrack}} 
\end{small}
\end{figure}

\textbf{Leakage Lowerbound}
We next analyze the worst-case performance of our proposed attack algorithm. For a given dataset, after the initial PSI call, the state variables of $\Gamma$ are available. Denote them as $(|\mathbb{N}|,C_{\mathbb{N}},\tau)$. With $\Theta(\mathbb{N},C_{\mathbb{N}},\tau)$, the partition factor $K$ can be determined. The uncertainty of the attack performance comes from the randomness in picking the $K$ out of the $|\mathbb{N}|$ individuals for the following protocol call. 
To derive the lower bound of the performance, here we analyze the worst-case scenario of selection. Then the left branch terminates if $K-C_L=0$ or $C_L=0$, and the right branch terminates if $C_{\mathbb{N}}-C_L=0$ or $\mathbb{N}-K-C_{\mathbb{N}}+C_L=0$. Note that if any of these conditions approach to $0$, the corresponding branch tends to terminate. The worst-case value of $C_L$ satisfies the following condition:
\begin{equation}\label{eq:wcm}
    \max_{C_L}[(K-C_L)^2+C_L^2+(C_{\mathbb{N}}-C_L)^2+(|\mathbb{N}|-K-C_{\mathbb{N}}+C_L)^2].
\end{equation}

\begin{prop}
The worst-case sampled $C_L$ for the partition factor $K$ is $C_L=C_{\mathbb N}K/\mathbb{N}$.
\end{prop}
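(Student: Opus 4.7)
The plan is to read (7) as the problem of choosing $C_L$ to leave all four termination gaps $K-C_L$, $C_L$, $C_{\mathbb{N}}-C_L$, and $|\mathbb{N}|-K-C_{\mathbb{N}}+C_L$ simultaneously far from zero, since any one of them vanishing would terminate the corresponding branch and leak deterministic memberships. First, I would substitute the candidate $C_L=KC_{\mathbb{N}}/|\mathbb{N}|$ into each of the four gaps and show they reduce respectively to $K(|\mathbb{N}|-C_{\mathbb{N}})/|\mathbb{N}|$, $KC_{\mathbb{N}}/|\mathbb{N}|$, $C_{\mathbb{N}}(|\mathbb{N}|-K)/|\mathbb{N}|$, and $(|\mathbb{N}|-K)(|\mathbb{N}|-C_{\mathbb{N}})/|\mathbb{N}|$, so that both child nodes inherit the parent's positive ratio $C_{\mathbb{N}}/|\mathbb{N}|$. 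This is the \emph{balanced} split that starves both the left and right branches of any all-positive or all-negative subset. Second, I would tie this value to the sampling law in (3) by noting that $KC_{\mathbb{N}}/|\mathbb{N}|$ is the mean (and approximate mode) of the hypergeometric distribution governing $C_L$, so it is exactly the typical realization the attacker will face and thus the right value to substitute when deriving a leakage lower bound.

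For the formal extremization in (7), I would expand each squared term, collect the coefficients of $C_L^{2}$ and $C_L$ to get a quadratic of the form $f(C_L)=4C_L^{2}+2(|\mathbb{N}|-2K-2C_{\mathbb{N}})C_L+\text{const}$, and then apply the first-order condition $f'(C_L)=0$ to identify the unique stationary point. The main obstacle I expect is that the coefficient of $C_L^{2}$ is positive, so $f$ is convex and the stationary point is a minimum of the sum of squares on an unrestricted domain rather than a maximum. I would resolve this by interpreting the $\max$ in (7) structurally: the worst case for the attacker is the interior balancing point at which neither branch is pushed toward the all-positive or all-negative boundary, and the previous paragraph's identification shows that this balancing point is precisely $C_L=C_{\mathbb{N}}K/|\mathbb{N}|$. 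A short verification that each of the four gaps is strictly positive at this value (under the natural constraints $0<C_{\mathbb{N}}<|\mathbb{N}|$ and $0<K<|\mathbb{N}|$ guaranteed by the outer while-loops of Alg.~\ref{alg:cap1}) then closes the argument.
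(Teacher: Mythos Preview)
The paper does not actually prove this proposition; it only states it and surrounds it with the heuristic that termination of a branch requires one of the four gaps $K-C_L$, $C_L$, $C_{\mathbb N}-C_L$, $|\mathbb N|-K-C_{\mathbb N}+C_L$ to vanish. Your first paragraph---substituting $C_L=KC_{\mathbb N}/|\mathbb N|$ and showing that both children inherit the parent's positive ratio, then identifying this value as the hypergeometric mean of the sampling law~(3)---is correct and is in fact the only defensible justification available. That part matches the paper's intent.

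The genuine gap is in your second paragraph. If you actually carry out the first-order condition on the quadratic you wrote down, you obtain
\[
f'(C_L)=8C_L+2(|\mathbb N|-2K-2C_{\mathbb N})=0
\quad\Longrightarrow\quad
C_L=\frac{2K+2C_{\mathbb N}-|\mathbb N|}{4},
\]
which is \emph{not} equal to $KC_{\mathbb N}/|\mathbb N|$ in general (try $|\mathbb N|=4$, $K=C_{\mathbb N}=1$: the stationary point is $0$ while the claimed value is $1/4$). The two happen to coincide exactly when $K=|\mathbb N|/2$, the even-split case used in the Remark, which is presumably why the discrepancy is not visible in the paper. So the ``formal extremization'' step cannot recover the proposition, and your structural reinterpretation is really just a restatement of your first paragraph rather than a rescue of~(7). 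In fact~(7) taken literally is the wrong criterion: a convex quadratic attains its $\max$ on the feasible interval $[\max(0,K+C_{\mathbb N}-|\mathbb N|),\min(K,C_{\mathbb N})]$ at an endpoint, which is the \emph{best} case for the attacker, not the worst. The proposition stands on the ratio-balancing / hypergeometric-mean argument alone; treat~(7) as a loose heuristic, not an optimization problem to solve.
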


Then the lower bound can be numerically derived using the dynamic programming algorithm in DyPathBlazer with the following modification: at each node, instead of asking for a PSI run, the adversary calculates $C_L$ according to \eqref{eq:wcm}. Therefore, the online attack algorithm can be transferred to a local attack except for the initial protocol run. 

\begin{rmk}
Compared with the leakage lower-bound of the attack proposed in  \cite{281334}, it is straightforward $K = |\mathbb N|/2$, and the worst case $C_L = C_{\mathbb{N}}/2$. 
\end{rmk}



We theoretically compare the lower bound of the membership leakage of the DyPathBlazer and the algorithm in \cite{281334} (denoted as USENIX 22) under two cases: case 1) among $100$ individuals, $50$ are positive members. case 2) among $100$ individuals, $20$ are postiive members. The comparison results are plotted in Fig. \ref{fig:lowerbound_com}.
\begin{figure}[t]
\begin{small}
\centering 
{ \includegraphics[width=0.23\textwidth]{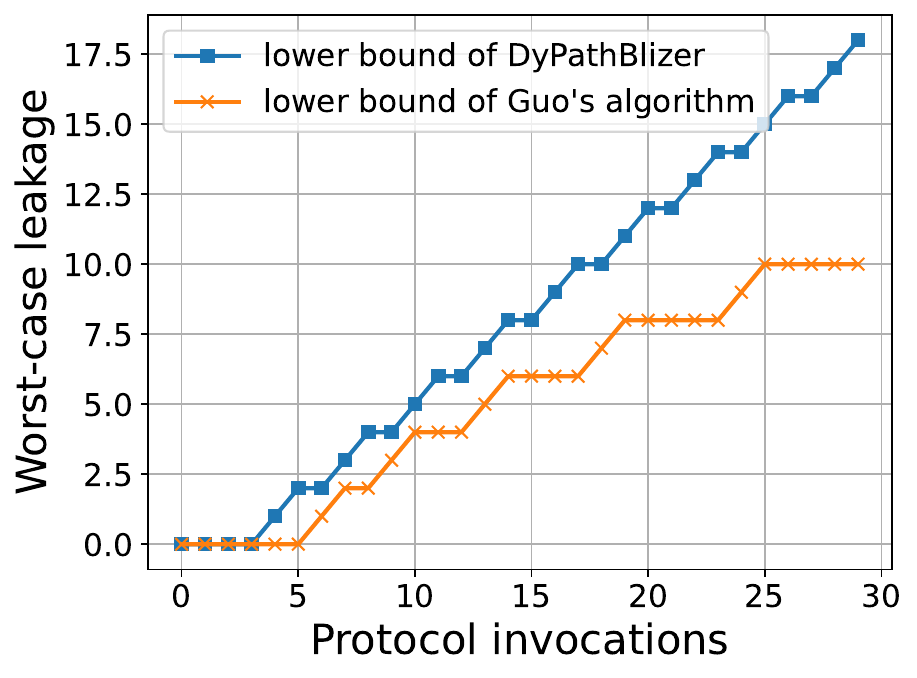} 
\label{fig:lower_1} } 
{ \includegraphics[width=0.23\textwidth]{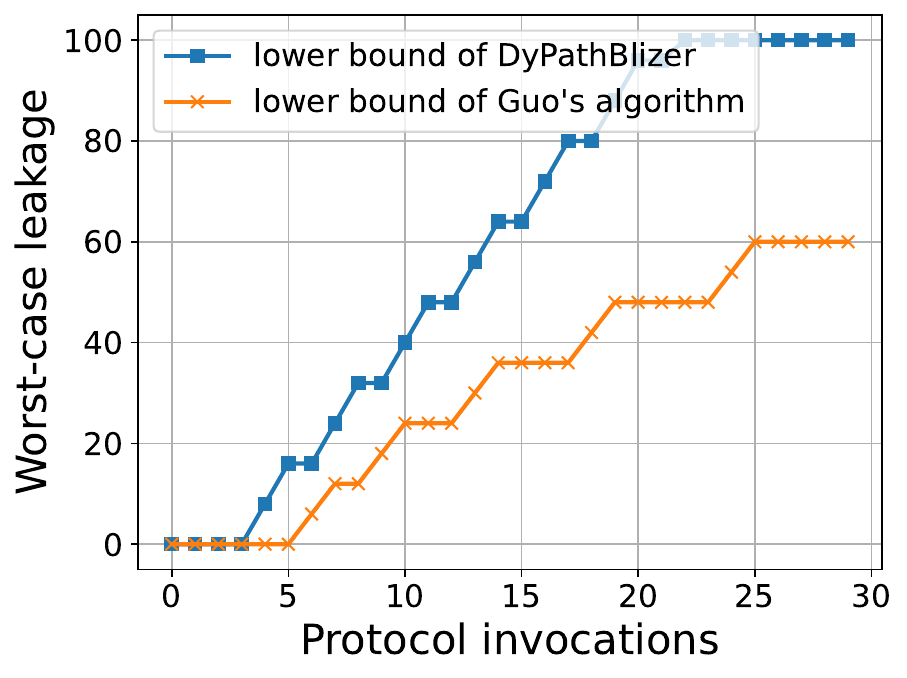}
\label{fig:lower_2} } 
\caption{Performance lower bound comparison (worst-case leakage comparison) of DyPathBlazer and  {Guo's algorithm} in \cite{281334}. Both cases consider a dataset of $100$ individuals. Case 1)  assumes $50$ positive members, case 2) assumes $10$ positive members.  {Higher lower bounds from DyPathBlazer guarantees better efficiency in the worst-case scenario.
DyPathBlazer's lower bound surpasses that of Guo's algorithm lower bound, ensuring improved efficiency in the worst-case scenario.}} 
\label{fig:lowerbound_com} 
\end{small}
\end{figure}

\begin{figure*}[t]
\begin{small}
\centering 
{ \includegraphics[width=0.7\textwidth]{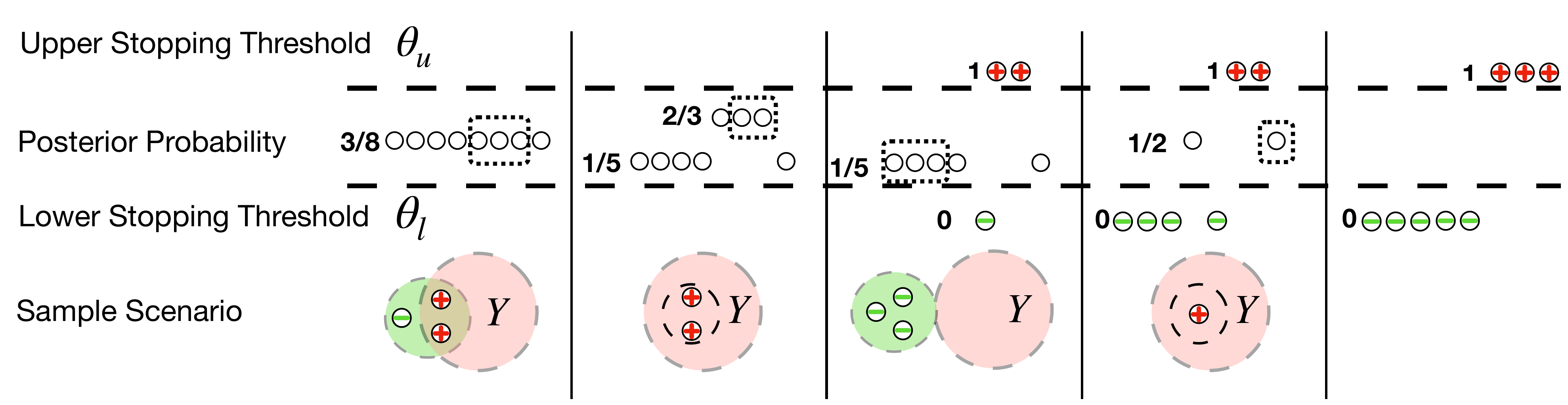} 
} 
\caption{An illustration of statistical membership inference attack.} 
\label{sta_att} 
\end{small}
\end{figure*}

\subsection{TreeSumExplorer: Deterministic attack on PSI-SUM}\label{sec:4-c}

In this part, we propose an attack algorithm to evaluate the privacy leakage of the PSI-SUM protocol. This protocol is a black box that receives
a set $\mathbb X$ from one party and a table $\mathbb Y ={(y_i
, v_i)}$ of index-value
pairs from the other party. It internally aggregates the values
associated with the indices in the intersection $\mathbb X \cap \mathbb Y$ and gets
the sum $\sum_{y_i\in \mathbb X\cap \mathbb Y} v_i$
. Then, the protocol sends this sum and the
intersection size $|\mathbb X\cap \mathbb Y|$ to the party that inputs the table.

Intuitively, the leakage of the PSI-SUM should be greater than the leakage of PSI-CA, as more information (summation of  the intersection) is observable during each protocol run. We next summarize the steps to take advantage of this information and conduct a more efficient privacy evaluation.

The basic idea behind this approach is to offline search all possible combinations that match the summation result. For example, after a protocol is called, the adversary observes the following: input size $|\mathbb N|$, returned number of members $C_{\mathbb{N}}$, and the summation of the intersected members' value \textsf{SUM}. Instead of directly applying the cardinality-based attack, the adversary is capable of traversal all possible combinations of the $C_{\mathbb{N}}$ out of $|\mathbb N|$ users in the dataset, and checking if the summation of their data values matches \textsf{SUM}. So the membership identity of these $|\mathbb N|$ users can be inferred in one shot (best case depending on the number of possible combinations). Such a problem is defined as an N-SUM problem (summation of  {$C_{\mathbb{N}}$} elements in the array,  {denoted as $\textsf{Arr}$}, matches a \textsf{SUM}) \cite{SOMA200257}, and a common algorithm for solving the N-SUM problem is summarized as follows:

\begin{algorithm}
\caption{Privacy evaluation algorithm for PSI-SUM}\label{alg:cap7}
\hspace*{\algorithmicindent} \textbf{Input:} A set $\mathbb X$ of target elements, protocol invocation times $\tau$, maximum set size $N$.\\
 \hspace*{\algorithmicindent} \textbf{Output:} Predicted sets: $\mathbb Z_{pos}, \mathbb Z_{neg}$.
\begin{algorithmic}
\item Initialize $\mathbb Z_{pos} = \varnothing$, $\mathbb Z_{neg} = \varnothing$
\item Find a subset $\mathbb N$ according to the computation power
\While{$0<\tau$}
\State $(C_{\mathbb N}, \textsf{SUM})\gets$ PSI-SUM($\mathbb N, \mathbb Y$),  $\tau \gets \tau -1$
\State \textsf{Arr} = N-SUM($\mathbb N, \textsf{SUM}, C_{\mathbb N}$)
\State \textsf{Priority} $\gets |\mathbb N|/|\textsf{Arr}|$. 
\State \textsf{queue}.enqueue $\{\textsf{Priority}, C_{\mathbb N}, \textsf{SUM}, \mathbb N, \textsf{Arr})\}$
\While{\textsf{queue} is not empty}
\State ($C_{\mathbb N}, \textsf{SUM}, \mathbb N, \textsf{Arr}$) $\gets$ \textsf{queue}.dequeue()
\While{$1<|\textsf{Arr}|$, and $0<\tau$}
\State $\mathbb N_L$ = Arr[0], $\mathbb N_R$ = $\mathbb N \setminus \mathbb{N}_L$
\State $(C_L,\textsf{SUM}_{L}) \gets$ PSI-SUM($\mathbb{N}_L, \mathbb Y$), $\tau \gets \tau - 1$
\State  $C_R \gets C_{\mathbb N} - C_L$, $\textsf{SUM}_R\gets \textsf{SUM}-\textsf{SUM}_L$
\State $\textsf{Arr}_L$ = N-SUM($\mathbb{N}_L, \textsf{SUM}_L, C_L$)
\State $\textsf{Arr}_R$ = N-SUM($\mathbb{N}_R, \textsf{SUM}_R, C_R$)
\State Compute priorities according to \eqref{sum_pri}
\State Continue with the high-priority branch
\State Push low-priority branch to the queue
\EndWhile
\If{$|\textsf{Arr}|=1$} 
\State $\mathbb Z_{pos}\gets \mathbb Z_{pos}\cup \textsf{Arr}$, $\mathbb Z_{neg}\gets \mathbb Z_{neg}\cup \mathbb N/ \textsf{Arr}$
\EndIf
\EndWhile
\EndWhile\\
\Return $\mathbb Z_{pos}, \mathbb Z_{neg}$
\end{algorithmic}
\end{algorithm}

\begin{itemize}
    \item Sort the array:  {The $\textsf{Arr}$} is sorted in ascending order.
\item Check the Base Case: If  {$C_{\mathbb{N}}=2$}, the problem is reduced to a 2-SUM problem. This is solved by initializing two pointers at both ends of the array and moving them toward each other until they meet. If the sum of the numbers at the pointer indices equals the target, then record this pair. The pointers are adjusted based on whether the current sum is less than or greater than the target.
\item Handle the Recursive Case: If  {$C_{\mathbb{N}} > 2$}, treat the problem as an  {$(C_{\mathbb{N}}-1)$}-SUM problem for each element in the array. This involves iterating over the array, and for each element, recursively solving an  {$(C_{\mathbb{N}}-1)$}-SUM problem with a new target that is equal to the original target minus the current element.
\end{itemize}

The computation complexity of the N-SUM solution described above is $\mathcal{O}(|\mathbb {N}|^{C_{\mathbb{N}}-1})$, which depends on the length of the input and the number of positive membership individuals. There is also a line of work improving this complexity \cite{chen2023subset, 4568054, nsum2}, however, is out of the scope of this paper. 

There could be multiple combinations of users whose summation matches \textsf{SUM} returned. The possibilities form a candidate set for the input of the next iteration. The adversary, then, randomly picks one of these combinations $\mathbb {N}'$ as the input to call for another protocol run. If the returned $\textsf{SUM}'$ matches the previous $\textsf{SUM}$, the selected combination are all positive memberships. Further, all  memberships in $\mathbb {N}$ are determined. Otherwise, the adversary has selected an incorrect combination as input. The adversary observes and obtains two sub N-SUM problems: a. N-SUM($\mathbb N'$, \textsf{SUM}', $C_{\mathbb N}$) and b. N-SUM($\mathbb N \setminus \mathbb N'$, \textsf{SUM}-\textsf{SUM}', $C_{\mathbb{N}}-C_{\mathbb{N}'}$). For each subproblem, the priority  {the sub-problem a or b} is calculated as follows:
\begin{equation}\label{sum_pri}
    \textsf{Priority} = \frac{|\mathbb N|}{| {\textsf{Arr}|}}.
\end{equation}

The adversary selects $\max\{\textsf{Priority}_a,\textsf{Priority}_b\}$ and continues his inference with the branch with higher priority, then pushes the lower priority branch to the priority queue. The algorithm for TreeSumExplorer is summarized in the Alg. \ref{alg:cap7}.

\section{ActBaysian: Statistical Attack on PSI-CA}\label{sec:4.2}

Different from the deterministic attack, where the adversary takes each individual's membership as constant. In the context of a statistical attack, each individual's membership is regarded as a binary random variable. This allows the adversary to design a soft-stopping criterion to finish his guessing. Therefore, the attacker's goal is to select subsets of $\mathbb X$ which helps accurately infer the identity of individual members.  Denote $L_i$ as a binary random variable that $L_i =1$ if $x_i\in \mathbb{X}\cap \mathbb{Y}$. ($L_i=0$ otherwise). Then the memberships in $\mathbb X$ can be specified as a random vector $p$, where $p[i]$ denotes the probability that $\text{Pr}(L_i = 1)$. Suppose for PSI attack runtime $t \in[1,..,\tau]$, the attacker selects a subset $\mathbb S_t$ of $\mathbb X$ as the input of PSI, let $O_1,...,O_{\tau}$ denote the release from the PSI protocol, then $O_t = |\mathbb S_t\cap \mathbb Y|$. 

In this section, we propose a statistical attack algorithm based on Bayesian posterior update and adaptive learning. We call this algorithm ActBaysian algorithm. The active learning process guarantees the adversary selecting the most informative subset of $\mathbb S_t$ at each time $t$. The posterior update process enables the adversary to update his belief on $L_i$ based on the observations.

\subsubsection{Posterior belief update}
After each PSI-CA call, the adversary observes the cardinality of the positive members contained in the input dataset. Based on this observation, the adversary updates their posterior belief. The update rule follows a maximum likelihood criterion. For a given pair of PSI input/output, denoted as $\mathbb S_{t}$ and $O_t$ respectively, the updated prior for each individual in the subset becomes:
\begin{equation}\label{update1}
p[i] = \frac{O_t}{|\mathbb S_t|}.
\end{equation}

This update reflects the ratio of positive members observed ($O_t$) to the cardinality of the input dataset ($|\mathbb S_t|$). The higher the observed positive count relative to the size of the input subset, the higher the posterior belief assigned to each individual.

On the other hand, all sets containing $\mathbb S_t$ can also update their posterior beliefs. Let $\bar{\mathbb S}_t$ be the subset such that $\bar{\mathbb S}_t\cap \mathbb S_t=\varnothing$, and the result of PSI-CA$(\mathbb S_t\cup\bar{\mathbb S}_t)$ is known, denoted as $O_{tot}$. After observing the output $O_t$, the posterior belief of all individuals' membership in $\bar{\mathbb S}_t$ becomes:
\begin{equation}\label{update2}
p[i] = \frac{O_{tot}-O_{t}}{|\bar{\mathbb S}_t|}.
\end{equation}

This update is based on the remaining positive count ($O_{tot}-O_{t}$) after subtracting the observed positive count $O_t$ from the known total positive count $O_{tot}$. The posterior belief is calculated by dividing this remaining positive count by the cardinality of $\bar{\mathbb S}_t$.
By updating the prior and posterior beliefs based on the observed positive counts and known total positive counts, the adversary can refine their belief about the membership status of each individual in the respective subsets. This iterative belief update process enables the adversary to incorporate new information and adjust their inference based on the observed results.


\textbf{Stopping criterion:} In the context of a statistical attack, the adversary has the flexibility to enhance their inference power at the cost of sacrificing some inference accuracy. This is achieved through the use of a soft-stopping criterion, which involves setting an upper threshold $\theta_{u}$ and a lower threshold $\theta_{l}$. The adversary classifies an individual's membership as positive if their posterior probability $p[i]$ is greater than or equal to $\theta_{u}$, and as negative if $p[i]$ is less than or equal to $\theta_{l}$.
It is important to note that the selection of these thresholds plays a crucial role in determining the performance of the attack. The threshold values significantly influence the trade-off between inference accuracy and the number of individuals' memberships inferred: 
If the upper threshold $\theta_{u}$ is set close to 1 and the lower threshold $\theta_{l}$ is set close to 0, the inference attack achieves high accuracy. However, there may be a limited number of individual memberships inferred. This occurs because the adversary requires a higher level of certainty before classifying an individual as a positive or negative member. Consequently, only individuals with posterior probabilities close to the extremes will be confidently classified, potentially leading to a reduced number of identified individuals.
On the other hand, if both the upper threshold $\theta_{u}$ and the lower threshold $\theta_{l}$ are set close to 0.5, a larger proportion of individuals' memberships can be inferred. However, the inference accuracy may be compromised. With lower thresholds, even samples with relatively uncertain posterior probabilities will be classified, resulting in a higher number of individuals being identified. Nevertheless, the trade-off is that the accuracy of these classifications may be lower compared to using more stringent thresholds.
\subsubsection{Active learning for input set selection}
The active learning contains two phases: 1. constructing candidate input sets by comparing the absolute distance between each posterior and the threshold.  2. determine the input set based on the minimized Manhattan distance.

\textbf{Minimize absolute distance:} During the attack process, the adversary aims to select individuals who are likely to meet the stopping criterion as the input for the PSI protocol. This selection is done in two directions: individuals whose prior probabilities are close to the upper threshold $\theta_{u}$ and individuals whose priors are close to the lower threshold $\theta_{l}$. The absolute distance between each individual's posterior and the upper/lower threshold is defined as follows:
\begin{equation}
\begin{aligned}
     &d_i^u = |p[i] - \theta_u|\\
    &d_i^l = |p[i] - \theta_l|.  
\end{aligned}
\end{equation}
For a given iteration, the minimized distance in the posterior vector is defined as $d_{\min}^u = \min_i d_i^u$, and $d_{\min}^l = \min_i d_i^l$, respectively.
However, directly selecting individuals with the highest or lowest priors introduces no randomness to the output, potentially leading to a dead loop in the attack. To mitigate this issue, two factors are introduced: the sampling rate $r$ and the tolerance factor $tol$.

The sampling rate and tolerance factor together determine the grouping principles for selecting the input individuals. The candidate input for the next iterations includes:

\begin{equation}
\begin{aligned}
\mathbb S^{u} &= \bigcup_{d^u_i[i]-d_{\min}^u\le tol} x_i \cdot \mathbf{1}_{sample} \\
\mathbb S^{l} &= \bigcup_{d^l_i[i]-d_{\min}^l\le tol} x_i \cdot \mathbf{1}_{{sample}},
\end{aligned}
\end{equation}

where $\mathbf{1}_{{sample}}$ is an indicator function that is determined by the sampling result. Intuitively, a larger tolerance factor $tol$ results in grouping more individuals who are not prone to meet the stopping criterion. This allows for a broader exploration of individuals in the search for those whose posterior probabilities may cross the thresholds. However, a larger candidate size increases the number of individuals whose memberships can be inferred in one iteration.
On the other hand, a small sampling rate $r$ tends to narrow down the inference scope quickly. The attack can rapidly infer a smaller portion of individuals with positive or negative memberships. However, this approach may lead to decreased overall inference efficiency since the algorithm becomes prone to depth-first search (DFS) behavior.
The choice of the sampling rate and tolerance factor depends on the specific attack objectives and constraints. By carefully tuning these parameters, the adversary can balance the trade-off between the efficiency of the attack, the number of inferences made in each iteration, and the overall accuracy of the inference process. We present a detailed analysis in numerical evaluation.

\textbf{Minimize Manhattan distance:} 
As the adversary keeps calling the PSI protocol, each individual's posteriors are pushing either towards $\theta_u$ or $\theta_l$. Given a tolerant factor $tol$, before each PSI call, the adversary has the choice to select from $\mathbb S_u$ and $\mathbb S_l$ as his input for the next PSI-CA call. This can be achieved by comparing the minimized averaged Manhattan distance between $\mathbb S_u$ and $\theta_{u}$, $\mathbb S_l$ and $\theta_l$:
\begin{equation}\label{eq:dis}
\begin{aligned}
    D_u = &\sum_{i: x_i \in \mathbb S_u} d^u_i/|\mathbb S_u|\\
    D_l = &\sum_{i: x_i \in \mathbb S_l} d^l_i/|\mathbb S_l|.\\
\end{aligned}
\end{equation}
Then, the adversary selects the input with a smaller distance toward the threshold. 
The algorithm is summarized in Alg. \ref{alg:sta1}.

\begin{algorithm}
\caption{Statistical Membership Inference Attack}\label{alg:sta1}
\hspace*{\algorithmicindent} \textbf{Input:} Victim set $\mathbb X$, $\theta_{u}$ upper threshold, $\theta_{l}$ lower threshold, PSI call budget $\tau$, sampling rate $r$, tolerance factor $tol$.\\
 \hspace*{\algorithmicindent} \textbf{Output:} Prediction of $\mathbb Z_{pos}$ and $\mathbb Z_{neg}$
\begin{algorithmic}
\item Initialize prior distribution for each user $\{p[i]=0.5\}_{i=1}^N$ 
\While{$\tau >0$ and $|\mathbb Z_{pos}+\mathbb Z_{neg}|<|\mathbb{N}|$}
\State $\mathbb S_{u}\gets \bigcup_{\{d^u_i[i]-d_{\min}^u\le tol\}} x_i$, sampling with $r$
\State $\mathbb S_{l}\gets \bigcup_{\{d^l_i[i]-d_{\min}^l\le tol\}} x_i$, sampling with $r$
\State Calculate $D_u$, $D_l$ with \eqref{eq:dis}. 
\If{$D_u\le D_l$}
\State $\mathbb S \gets  \mathbb {S}_u$
\Else
\State $\mathbb S \gets \mathbb {S}_l$
\EndIf
\State $C_{\mathbb{S}} = $ PSI-CA $(\mathbb S, \mathbb Y)$
\State Update posteriors according to \eqref{update1} and \eqref{update2}
\For{$1\le i\le |\mathbb N|:$}
    \State $\mathbb Z_{pos} = \mathbb Z_{pos}\cup x_i$ if $p[i]<\theta_{l}$ 
    \State $\mathbb Z_{neg} = \mathbb Z_{neg}\cup x_i$ if $p[i]>\theta_{u}$ 
    \EndFor    
\EndWhile\\
\Return $\mathbb Z_{pos}$, $\mathbb Z_{neg}$
\end{algorithmic}
\end{algorithm}

\subsection{Error bound}
The statistical attack incurs classification error corresponding to the upper and the lower stopping threshold: A group of individuals is classified as positive members if most of them are positive members. Similarly, a group of individuals are classified as negative members if most of them are negative members. Specifically, we define Type I, Type II, and misclassification rate for an individual $x_i$ as follows:
\begin{equation}
\begin{aligned}
    P^e_{type I}= &\text{Pr}(x_i \in \mathbb X\cap \mathbb Y|x_i \in \mathbb Z_{neg}),\\
    P^e_{type II}= &\text{Pr}(x_i \notin \mathbb X\cap \mathbb Y|x_i \in \mathbb Z_{pos}),\\
     {P}^e_{mis} = & P^e_{type I}\text{Pr}(x_i \in \mathbb Z_{neg}) + P^e_{type II}\text{Pr}(x_i \in \mathbb Z_{pos}).
\end{aligned}
\end{equation}
When the stopping state containing $x_i$ is $(\mathbb{N}, C_{\mathbb N}, \tau)$, from algorithm \ref{alg:sta1},  the upper bound of these error probabilities are summarized in proposition \ref{prop:2}.
\begin{prop}\label{prop:2} The probability upper bounds of Type I, Type II error, and the misclassification rate are as follows: 
\begin{equation}
    \begin{aligned}
    &P^e_{type I}\le 1-C_{\mathbb N}/|\mathbb{N}| ~ \text{if} ~ \theta_u\le C_{\mathbb N}/|\mathbb{N}|\le 1, \\
    &P^e_{type II}\le C_{\mathbb N}/|\mathbb{N}| ~ \text{if} ~ 0\le C_{\mathbb N}/|\mathbb{N}|\le\theta_l,\\
    &P^e_{mis}\le 1-\theta_u +\theta_l.
    \end{aligned}
\end{equation}
\end{prop}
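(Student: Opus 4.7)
The plan is to show that at any stopping state produced by Algorithm~\ref{alg:sta1}, the heuristic posterior $p[i]$ in fact coincides with the true Bayesian conditional probability $\Pr(L_i = 1 \mid \text{history})$, after which the three bounds follow directly from the classification rule together with the law of total probability.

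The substantive ingredient is an exchangeability claim: if the algorithm halts with $x_i$ belonging to a current state $(\mathbb N, C_{\mathbb N}, \tau)$, then conditional on every PSI-CA output observed so far, the $|\mathbb N|$ indices in $\mathbb N$ are exchangeable with respect to membership in $\mathbb X \cap \mathbb Y$, and exactly $C_{\mathbb N}$ of them are positive; hence $\Pr(x_i \in \mathbb X \cap \mathbb Y \mid \text{history}) = C_{\mathbb N}/|\mathbb N| = p[i]$. I would prove this by induction on the number of completed PSI-CA calls, starting from the symmetric prior $p[i] = 1/2$. In the inductive step the updates \eqref{update1} and \eqref{update2} assign a common value to every index in $\mathbb S_t$ (respectively $\bar{\mathbb S}_t$), which mirrors the fact that $O_t = |\mathbb S_t \cap \mathbb Y|$ is invariant under permutations within these subsets; the active-learning rule then groups indices only by their common distance to the thresholds, so no index is ever singled out beyond the cardinality information already disclosed.

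Once $p[i] = C_{\mathbb N}/|\mathbb N|$ is in place, the three bounds become direct. If $x_i \in \mathbb Z_{pos}$ the stopping rule forces $p[i] \ge \theta_u$, so $\Pr(x_i \notin \mathbb X \cap \mathbb Y \mid x_i \in \mathbb Z_{pos}) = 1 - C_{\mathbb N}/|\mathbb N|$, which gives the first displayed bound in the regime $\theta_u \le C_{\mathbb N}/|\mathbb N| \le 1$; symmetrically $x_i \in \mathbb Z_{neg}$ forces $C_{\mathbb N}/|\mathbb N| \le \theta_l$ and yields the second bound. For the misclassification rate I would apply the law of total probability, insert the two conditional bounds, and use $\Pr(x_i \in \mathbb Z_{pos}) + \Pr(x_i \in \mathbb Z_{neg}) \le 1$ together with $0 \le 1-\theta_u,\theta_l$ to collapse the resulting sum to $(1-\theta_u) + \theta_l$.

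The main obstacle I expect is making the exchangeability claim fully rigorous, because the attacker's set selection is adaptive and the sub-sampling at rate $r$ could in principle couple identities across rounds. My approach would be to carry the induction not on individual indices but on the partition of $\mathbb X$ into equivalence classes of indices that have appeared in exactly the same collection of inputs so far, and to verify that both the posterior updates and the active-learning selection preserve this partition. Granted that invariant, $p[i] = \Pr(L_i = 1 \mid \text{history})$ holds at every stopping state and the remainder of the proof reduces to the elementary calculations above.
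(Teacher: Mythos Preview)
The paper states Proposition~\ref{prop:2} without proof, so there is nothing in the text to compare your argument against. Your approach---establishing that the heuristic posterior $p[i]=C_{\mathbb N}/|\mathbb N|$ is in fact the exact conditional probability $\Pr(L_i=1\mid\text{history})$ via an exchangeability invariant, and then reading off the three bounds---is the natural route and is essentially correct. In particular, under your exchangeability claim the first two displayed lines are actually \emph{equalities}, not merely upper bounds, and the third line follows exactly as you describe by bounding each conditional error by its threshold ($1-\theta_u$ and $\theta_l$ respectively) before applying $\Pr(x_i\in\mathbb Z_{pos})+\Pr(x_i\in\mathbb Z_{neg})\le 1$.

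One labeling point worth flagging: you compute $\Pr(x_i\notin\mathbb X\cap\mathbb Y\mid x_i\in\mathbb Z_{pos})$ and match it to ``the first displayed bound,'' but by the paper's own definitions that quantity is $P^e_{type\,II}$, not $P^e_{type\,I}$. The confusion is not yours: Algorithm~\ref{alg:sta1} as printed sends $x_i$ to $\mathbb Z_{pos}$ when $p[i]<\theta_l$ and to $\mathbb Z_{neg}$ when $p[i]>\theta_u$, which is evidently reversed, and the condition labels in the proposition inherit this swap. Once the roles of $\mathbb Z_{pos}$ and $\mathbb Z_{neg}$ are straightened out, your argument goes through verbatim. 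The caveat you raise about adaptive sampling breaking exchangeability is real but manageable exactly along the lines you sketch: because the selection rule in Section~\ref{sec:4.2} depends on indices only through their current $p[i]$ value, and uniform sub-sampling at rate~$r$ is itself symmetric, the partition into ``indices with identical query history'' is preserved at every step.
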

It is straightforward to extend the error probabilities to correct guessing probabilities: 
\begin{equation}
    \begin{aligned}
    &P_{TP} = 1- P^e_{type I} \ge C_{\mathbb N}/|\mathbb{N}| ~\text{if} ~ \theta_u\le m/N\le 1, \\
    &P_{FP} = 1 -P^e_{type II} \ge 1-C_{\mathbb N}/|\mathbb{N}| ~\text{if} ~ 0\le C_{\mathbb N}/|\mathbb{N}|\le\theta_l.\\
    \end{aligned}
\end{equation}
where $P_{TP}$ stands for true positive probability and $P_{FP}$ stands for false positive probability.

\begin{figure*}[htp]
\begin{small}
\centering 
{ \includegraphics[width=0.32\textwidth]{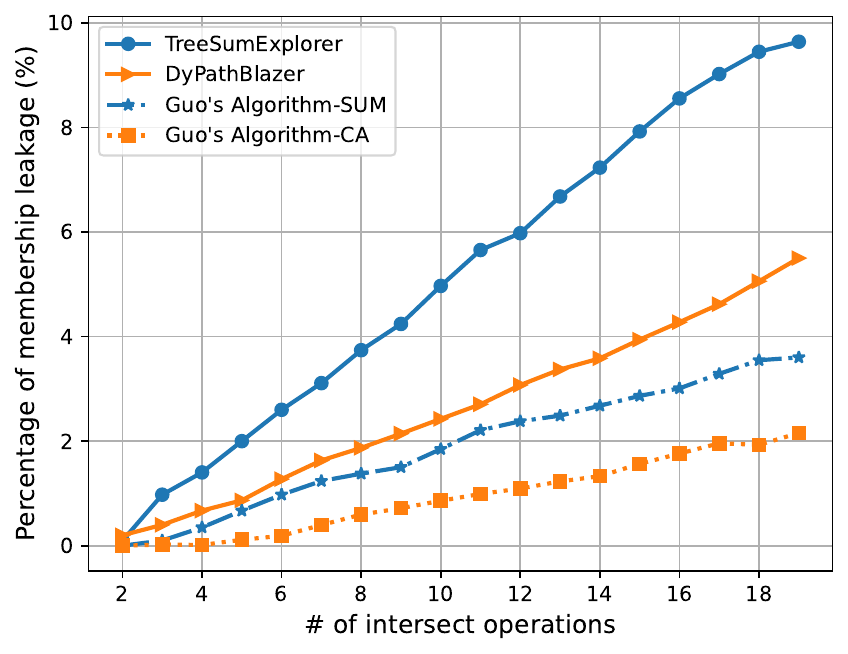}
\label{fig:tao_a_1} } 
{ \includegraphics[width=0.32\textwidth]{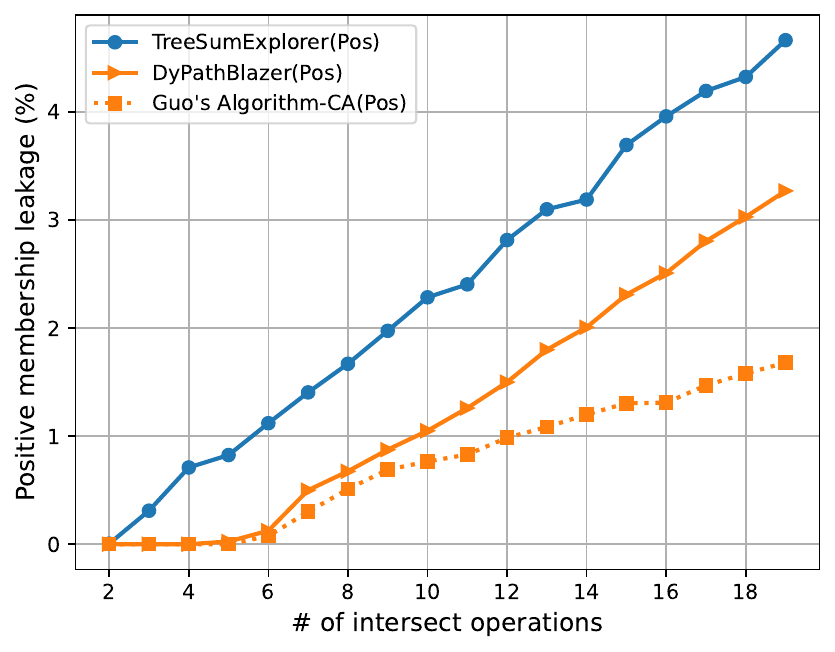} 
\label{fig:tao_a_2} } 
{ \includegraphics[width=0.32\textwidth]{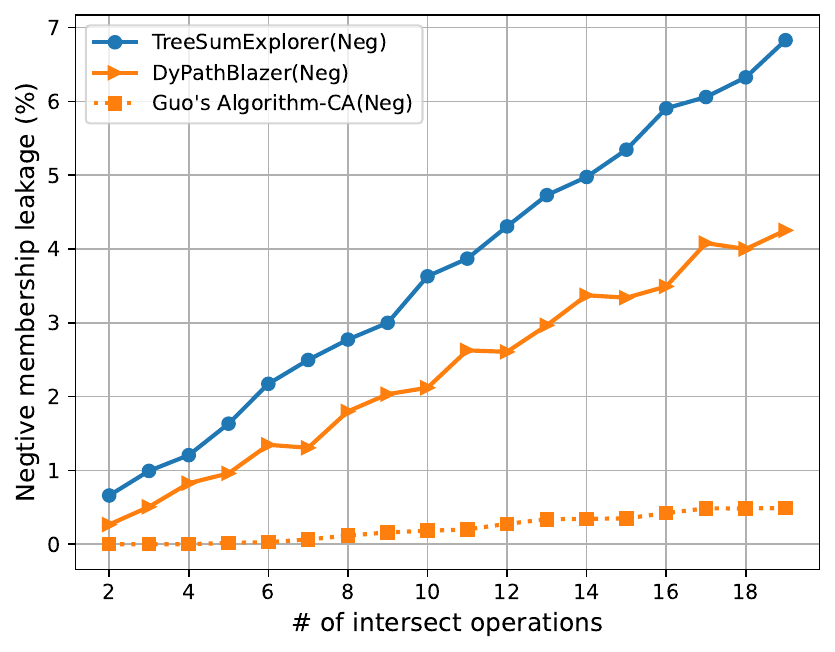}
\label{fig:tao_a_2} } 
{ \includegraphics[width=0.31\textwidth]{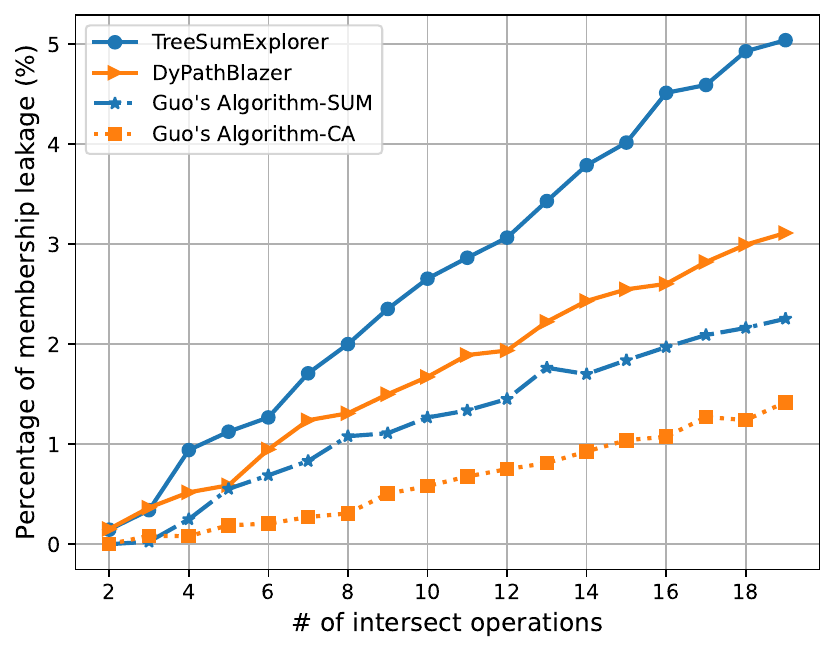}
\label{fig:ads_a_1} } 
{ \includegraphics[width=0.32\textwidth]{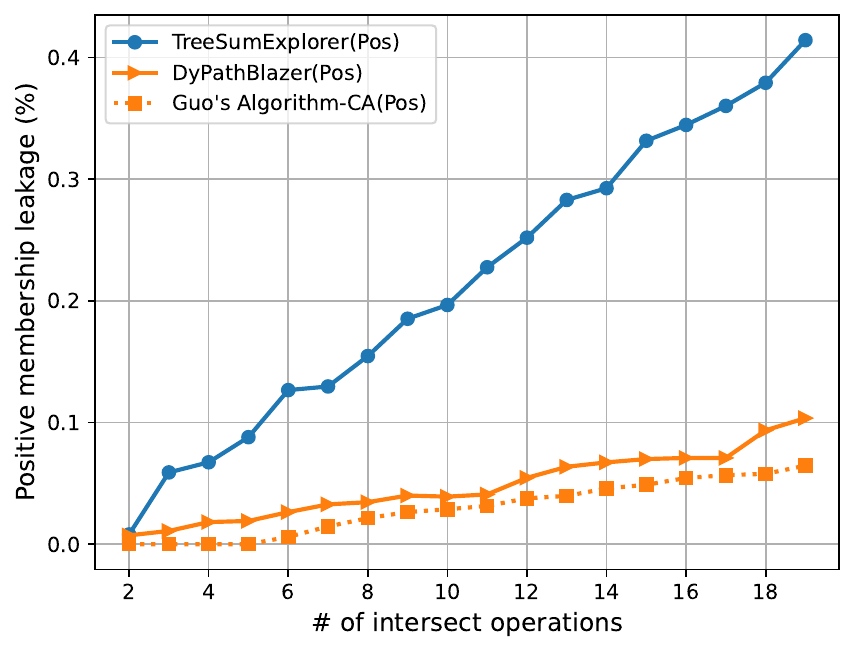} 
\label{fig:ads_a_2} } 
{ \includegraphics[width=0.32\textwidth]{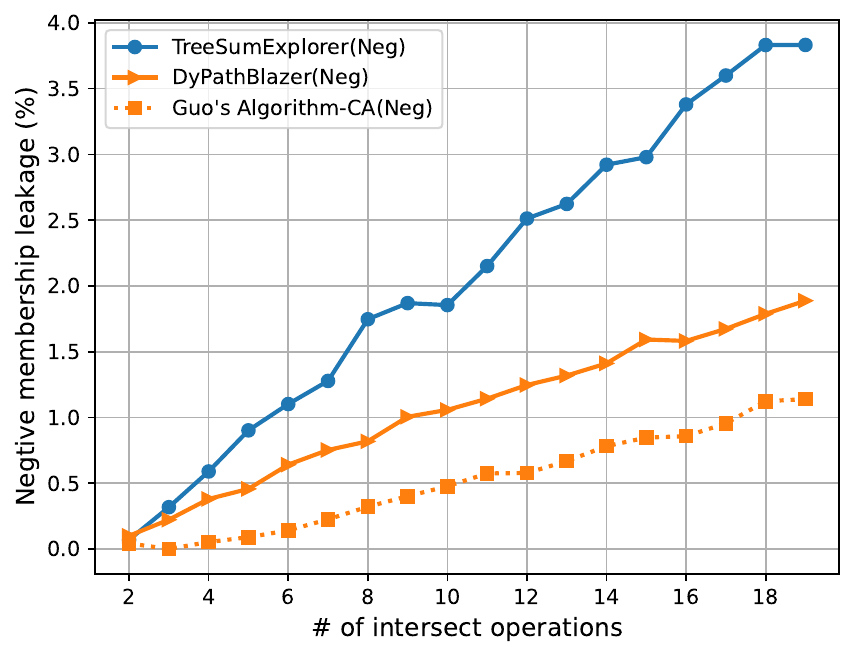}
\label{fig:ads_a_2} } 
\caption{Membership information leakages in PSI-SUM under different attacks.  {Column 1 displays the overall membership leakage, while Column 2 represents positive membership leakage, and Column 3 represents negative membership leakage. The first row corresponds to cases where the product company targets the advertising company, and the second row corresponds to cases where the advertising company targets the product company.}} 
\label{real_data1} 
\end{small}
\end{figure*}


\begin{figure*}[t]
\begin{small}
\centering 
{ \includegraphics[width=0.32\textwidth]{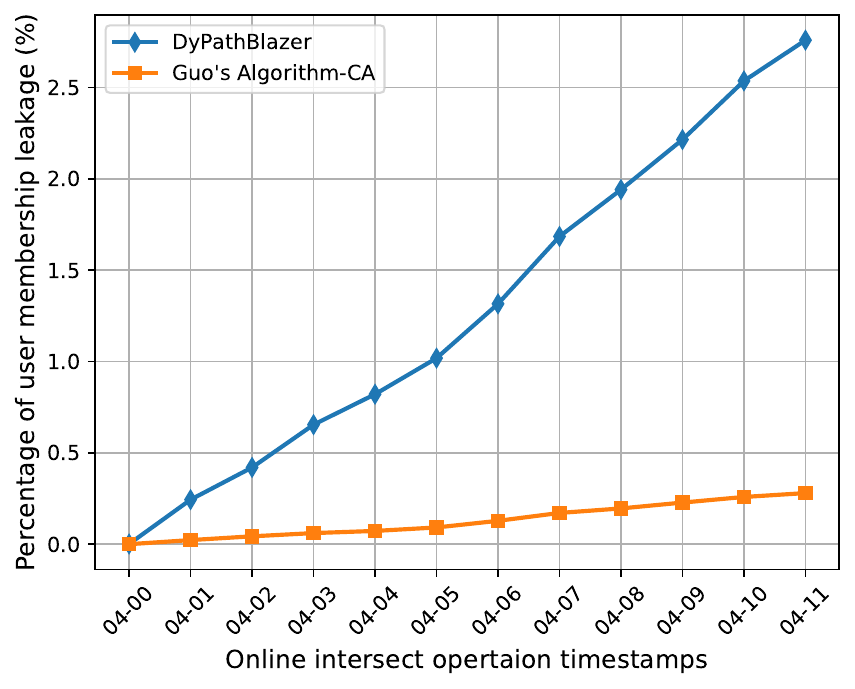}
\label{fig:covid1} } 
{ \includegraphics[width=0.32\textwidth]{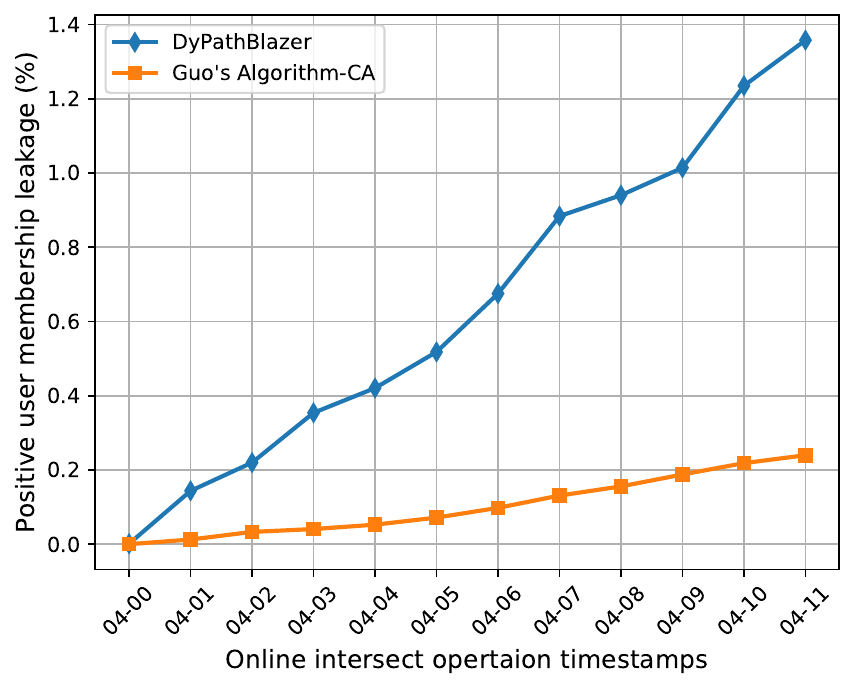} 
\label{fig:covid2} } 
\caption{Membership information leakages in PSI-CA under different attacks. Subcase (a) showing total membership leakage (positive member + negative member); (b) showing positive membership leakage only} 
\label{real_data3} 
\end{small}
\end{figure*}

\section{Experiments}
In this section, we evaluate real data to compare different attack algorithms. We divide the evaluation into two subsections, for the first part, we compare the efficiency of different deterministic attacks, then in the second part, we  show the performance of the statistical attack.

\textbf{Attacks considered} For the deterministic attacks, we perform several membership inference attacks on  {PSI-CA alike} protocols. Specifically, we compare the following attack algorithms: (a) baseline attack in \cite{281334}  {which is denoted as ``Guo's algorithm-CA"}(b) DyPathBlazer, as described in Section \ref{sec:4.1}. (c)  {TreeSumExplorer in Section \ref{sec:4-c}}, and (d) The USENIX22 solution but takes SUM as the output of the PSI  {denoted as Guo's algorithm-SUM}.  

For the statistical attacks, we show the impact of different parameters on the attack efficiency and accuracy. Then we present attack performance against Differentially Private PSI protocols.


\textbf{Datasets:} We consider two real-world membership-sensitive datasets for evaluation. The first dataset in our experiment is Taobao’s dataset of ad display/click records \cite{taobao}. This dataset was collected on Taobao, a Chinese online shopping platform owned by Alibaba Group. This platform allows small businesses and individual entrepreneurs to open online retail stores and sell their products. After data cleaning, there are 25,029,435 ad display/click records concerning
827,009 ads and 1,061,768 individuals. The records were
collected from May 6, 2017, 00:00:00 AM to May 14, 2017,
00:00:00 AM. This dataset is used for leakage quantification to measure ad conversion revenue. The second dataset is COVID-19 dataset of tested individuals in Israel4
 \cite{covid}. This dataset was collected by Israel Ministry of Health. The COVID-19 dataset includes 255,668 distinct individuals who were tested for COVID-19 from March 22, 2020, to April 30, 2020. Each individual record has a test date. This dataset is used for the leakage quantification in COVID-19 contact tracing.

\subsection{Deterministic attacks}

\textbf{Offline attack evaluations} We first consider an offline setting using the Taobao dataset. We select a brand with id number 185, which has 504 ads (for another set of experiments, we select brand id 279 with 403 ads). Each ad is associated with a unique price for the related product and a list of buyers who may or may not have clicked the ad. We let party A, the company of the brand, include the list of buyers labeled by the total amount of money they have spent on this company's product. For party B, the ads platform, we assume they possess the list of individuals who have clicked the ads from this company. Party A requests a PSI-SUM protocol and wants to know how many people have clicked the ads they put on Party B's platform before buying their products and how much they spend on their products. On the other hand, Party A is also interested in re-identifying the persons in the intersection for targeted advertising. After preprocessing, there are 22615 persons who have purchased the products from brand 185. Among them, 4762 have clicked the ads from this brand. For brand 279, there are 15424 persons who have purchased the products, and 332 users have clicked the ads.

The offline attack evaluations are shown in Fig. \ref{real_data1}, where the membership leakage percentages using different attacks are compared according to different protocol call limitations. Further, we consider two subcases when the adversary is only interested in the identity of individuals in the intersection or those who are not in the intersection, respectively. It is worth noting that our DyPathBlazer and TreeSumExplorer algorithms can both be adapted to each setting by designing different goals in the objective functions. Observe that our DyPathBlzaer outperforms the attack in Guo et al. in efficiency, while the PSI-SUM attack achieves the highest efficiency. 

\textbf{Online attack evaluations:} In the second scenario, we consider an online setting using the COVID-19 dataset. Party A is a local community that provides COVID testing services. Party B is the Lab that processes the testing results. It is assumed that Party B directly publishes testing results to each individual, so Party A is unclear which individual tests positive for COVID. On the other hand, Party A keeps monitoring the trend of total positive individuals in the community and is also interested in inferring the identity of each positive individual for targeted control. Party A calls for a PSI-CA protocol with Party B daily. We evaluate different attacks for PSI-CA protocol by comparing the total number of positive individuals they infer according to the testing timeline.

The online attack evaluations are shown in Fig.\ref{real_data3}. Note that we slightly modified the DyPathBlazer to estimate and maximize only the expected number of positive members. The dataset is assumed to be updated daily but stays the same for one day. Party A's maximal protocol call limit is 10 times per day. Observe that our DyPathBlazer achieves significantly higher efficiency compared to the Guo et al.

\begin{table*}
    \centering
\caption{Statistical attack: effects of various parameters on results}
\label{tab:my_label}
\renewcommand{\arraystretch}{1.5}
    \begin{tabular}{c|c|c|c|c|c|c} 
          &Default&  $\theta_u$(0.8 / 1)&  $\theta_l (0 / 0.2)$&  tol (0 / 0.2)&  $r$ (0.3 / 0.9)&  $\tau$ (10 / 50)\\ \hline 
          True Positive Percentage&0.08&  0.14~/~0.06&  0.11 / 0.02&  0.05 / 0.07&  0.05 / 0.04&  0.02 / 0.17\\ \hline 
          True Negative Percentage&0.27&  0.25 / 0.29&  0.22 / 0.31&  0.23 / 0.25&  0.21 / 0.22&  0.12 / 0.83\\ \hline 
          Type I error rate&0.083&  0.15 / 0&  0.09 / 0.04&  0.072/ 0.084&  0.084 / 0.08&  0.12 /0.064\\ \hline 
          Type II error rate&0.085&  0.087 / 0.092&  0 / 0.17&  0.082 / 0.085&  0.085 / 0.082&  0.16 / 0.055\\ 
    \end{tabular}
\end{table*}

\begin{figure*}[t]
\begin{small}
\centering 
{ \includegraphics[width=0.31\textwidth]{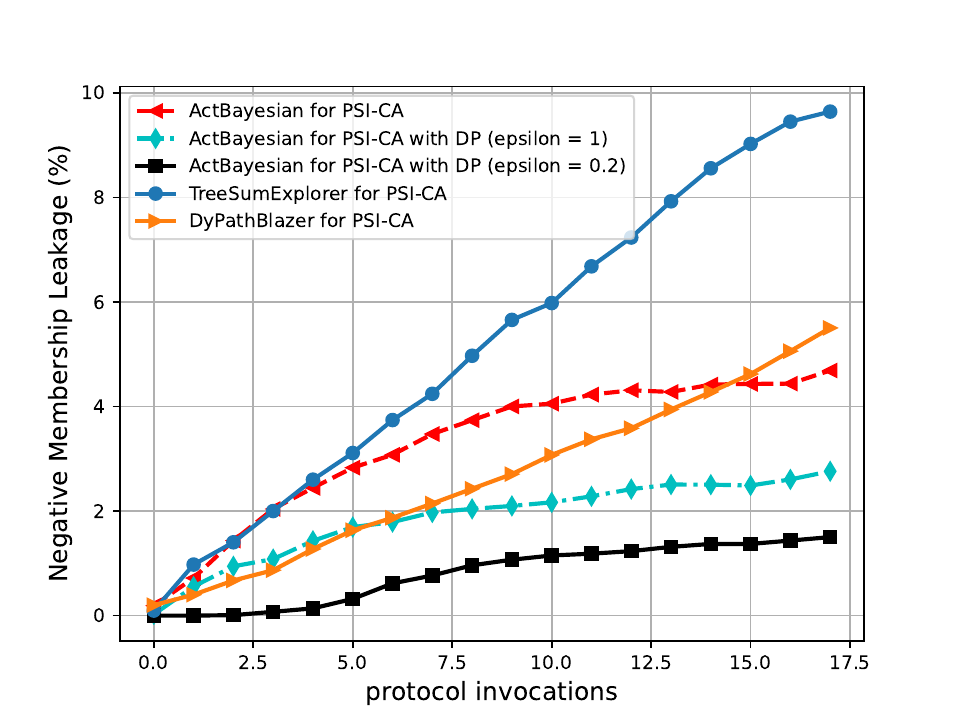}
\label{fig:sta_det1} } 
{ \includegraphics[width=0.31\textwidth]{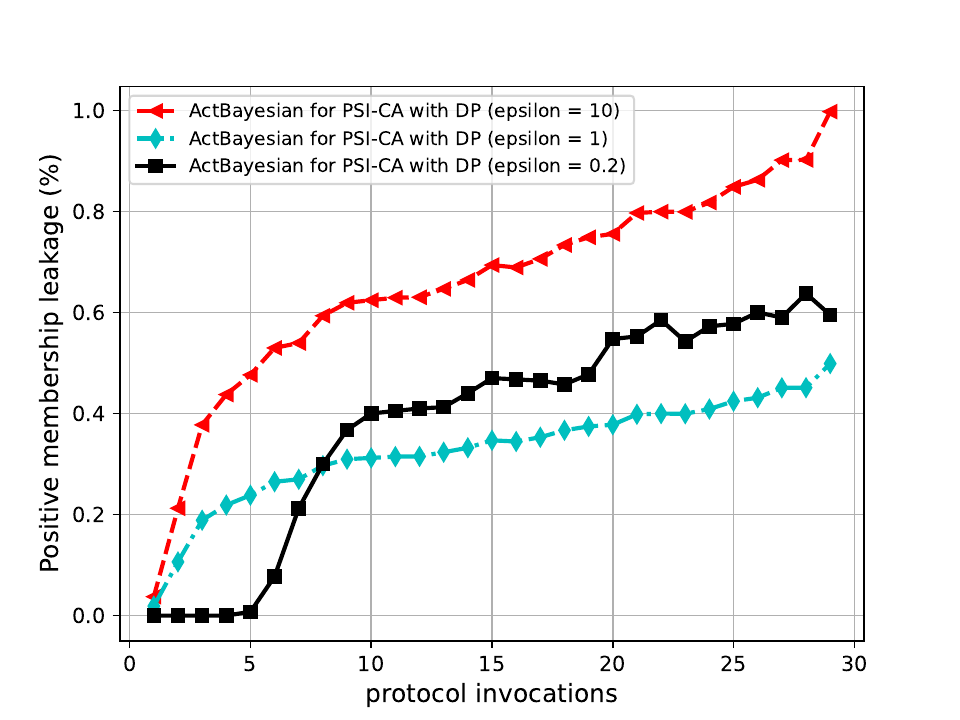}
\label{fig:ads_a_1} } 
{ \includegraphics[width=0.31\textwidth]{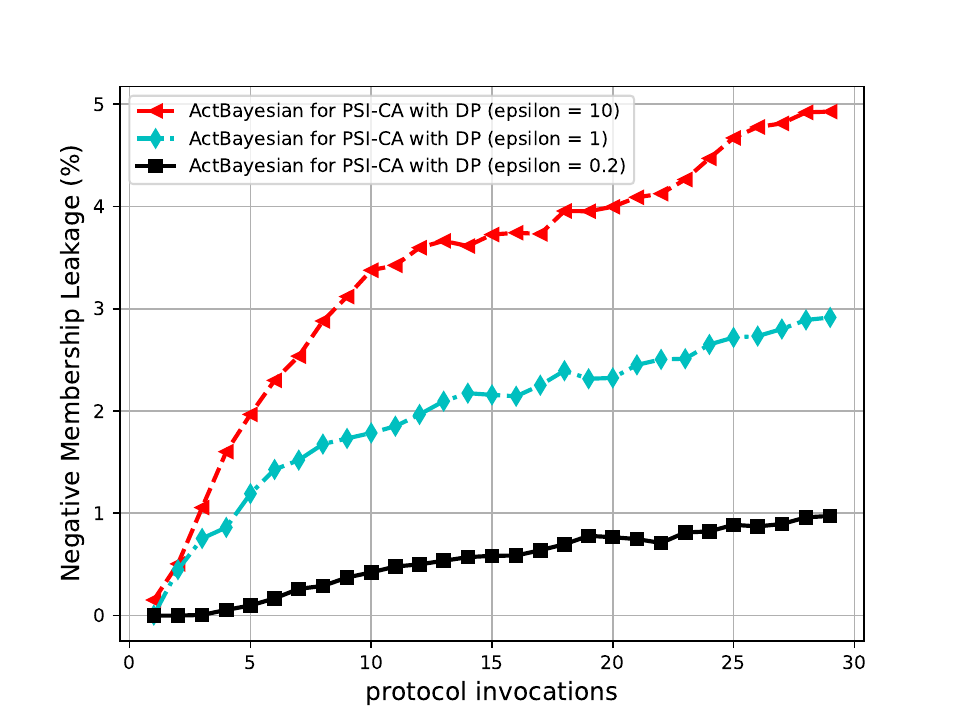} 
\label{fig:tao_a_2} } 
{ \includegraphics[width=0.31\textwidth]{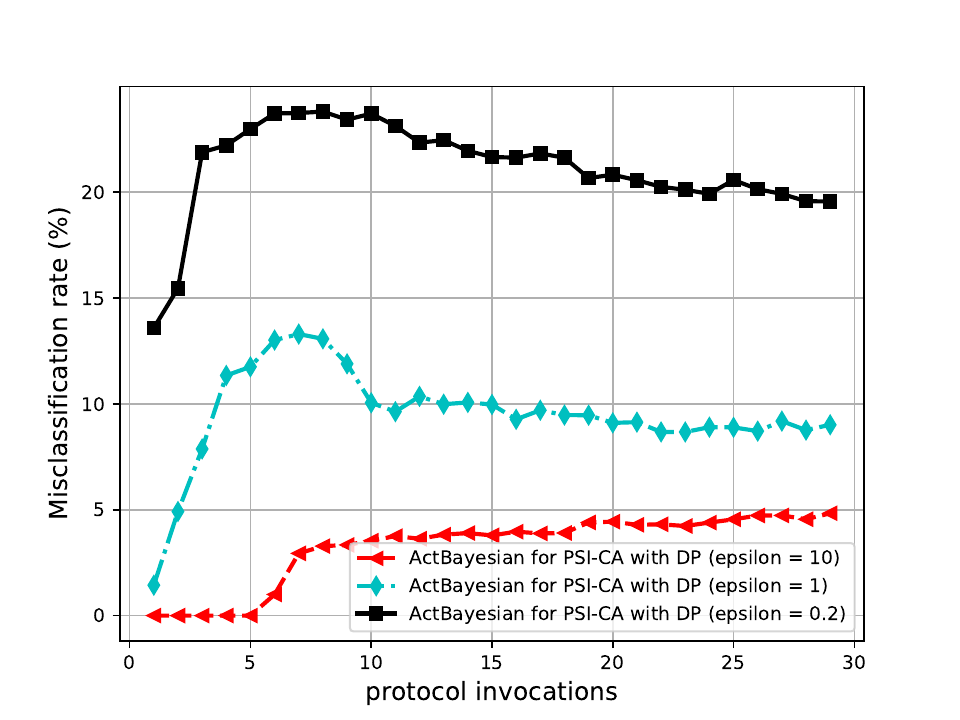}
\label{fig:tao_a_3} } 
{ \includegraphics[width=0.31\textwidth]{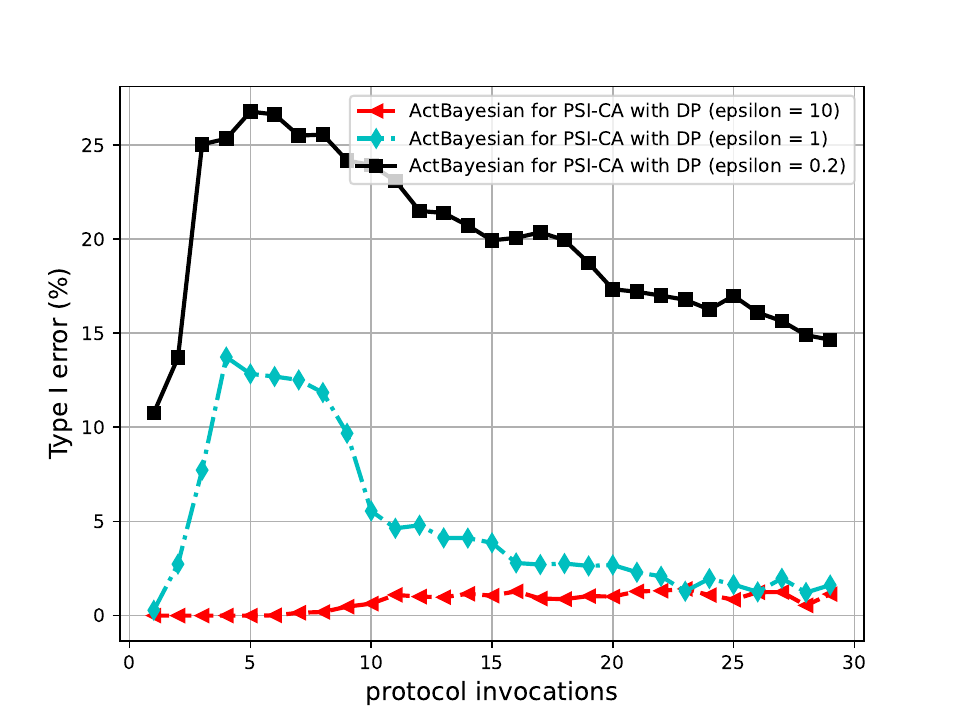}
\label{fig:tao_a_3} } 
{ \includegraphics[width=0.31\textwidth]{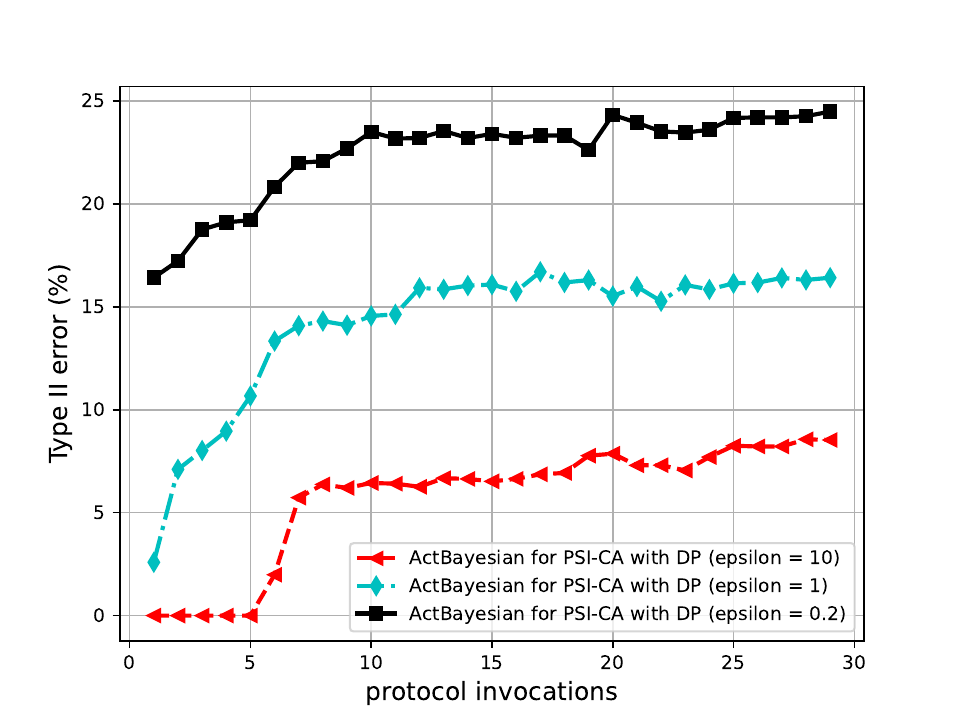}
\label{fig:tao_a_4} } 
\caption{Membership information leakages with statistical attack, ads company} 
\label{real_data:sta1} 
\end{small}
\end{figure*}


\begin{figure*}[t]
\begin{small}
\centering 
{ \includegraphics[width=0.37\textwidth]{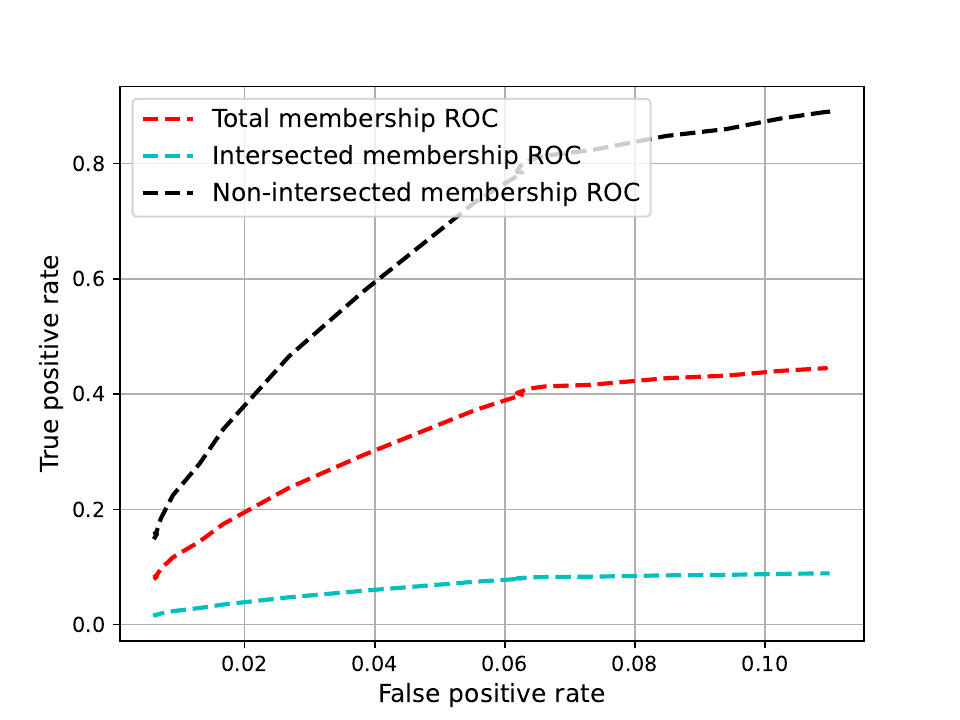}
} 
{ \includegraphics[width=0.37\textwidth]{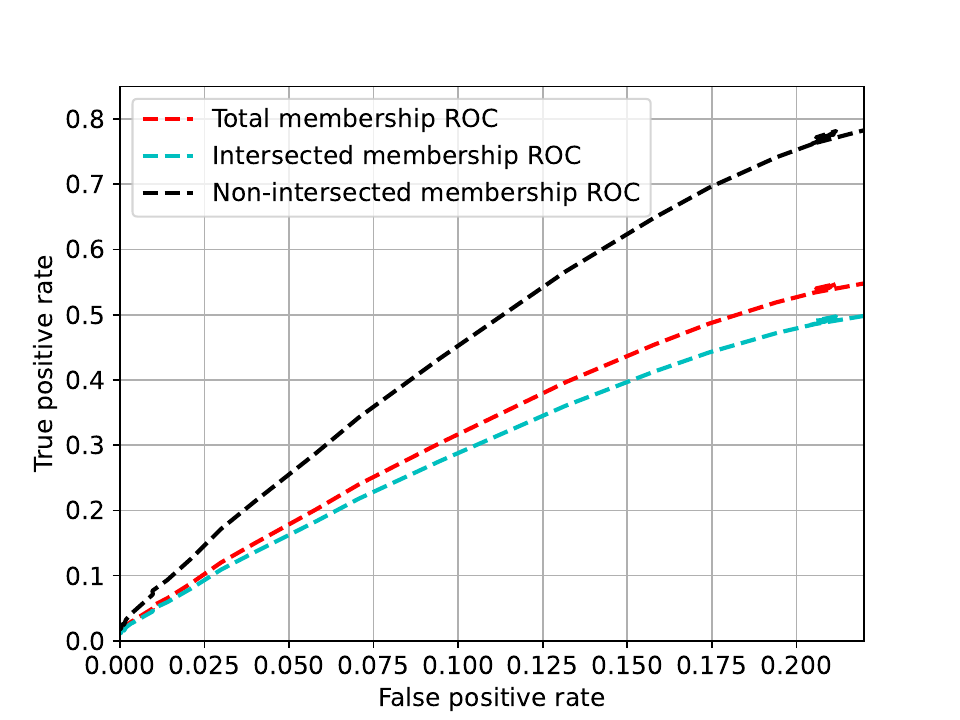}
} 
\caption{ROC curves of the ActBaysian algorithm with DP protection, $\epsilon = 0.2$}  
\label{fig:ROC1} 
\end{small}
\end{figure*}

\subsection{Statistical attacks} 

\textbf{For PSI-CA without protection} To illustrate the effectiveness of statistical attacks on PSI-CA, we initially adopt a similar context as with the deterministic attacks. However, we employ the attack methodology delineated in Section \ref{sec:4.2}. This algorithm is parameterized by $\theta_u$, $\theta_l$, $tol$, $r$, and $\tau$. We assign default values to each of these parameters as follows: $\theta_u = 0.9$, $\theta_l = 0.1$, $tol = 0.1$, $r = 0.5$, $\tau = 20$. Subsequently, while maintaining other values constant, we vary each parameter in turn and evaluate the resulting performance of the attack algorithm. The performance metrics under consideration include the number of correctly inferred individuals and the Type I and Type II error rates. The result is shown in Table \ref{tab:my_label}.

Key insights can be derived from the results. As $\theta_u$ increases, fewer individuals are classified as positive members, which in turn leads to a decrease in the true positive percentage as well as the Type I error rate. In contrast, reducing $\theta_u$ allows a larger number of individuals to be classified as positive. While this results in an increase in true positive individuals correctly inferred, it also leads to a rise in the Type I error rate. The impact of $\theta_l$ mirrors this pattern.
The tolerance factor $tol$ regulates the quantity of individuals whose priors are updated in one PSI call. Lowering this value prompts the algorithm to focus on exploitation, resulting in insufficient individuals being inferred given a particular PSI call budget. Conversely, excessively increasing this value tilts the system towards exploration, causing a gradual narrowing of its scope and a consequent reduction in the final count of inferred individuals.
The sampling rate $r$ determines the proportion of individuals to be included in the subsequent PSI input. An even division is typically advantageous unless $\tau$ is small, in which case a lower $r$ accelerates the narrowing down rate, analogous to our DyPathBlazer.
Finally, for larger values of $\tau$, the attack is capable of inferring most of the individuals' membership, and the error rate concurrently decreases.

\textbf{For PSI-CA with Differential Privacy} 
Recent studies have revealed the potential to enhance the privacy guarantee of PSI through its integration with DP. DP offers robust privacy protection by adding calibrated random noise to the raw response to a query, as defined in  \cite{Dwork2006DifferentialP}. In the PSI context, a DP-incorporated PSI-CA protocol can be seen as introducing randomness to the published intersection size, and this randomness has been proven to achieve $\epsilon$-DP.

Subsequent experiments aim to evaluate the membership information leakage from an $\epsilon$-DP-protected PSI protocol. We are considering a Laplacian DP mechanism, wherein the scale of the Laplacian noise is defined as:

\begin{equation}\label{eq:dp}
\lambda = \frac{\Delta f}{\epsilon_i}=\frac{\tau}{\epsilon}.
\end{equation}

In this study, we derive Equation \eqref{eq:dp} under a basic DP composition theorem: the composition of $\tau$ consecutive privacy-preserving mechanisms, each satisfying $\epsilon$-DP, complies with $\tau\epsilon$-DP. We set the sensitivity of DP, $\Delta f$ (as depicted in Equation \eqref{eq:dp}), to $1$ by default, analogous to a counting query. It is crucial to note that deterministic attacks fall short in measuring privacy leakage under DP protection, primarily due to their exclusion of randomness and error considerations. 

We incorporate Laplacian random noise into each PSI result in our ensuing experiments using the TaoBao dataset. We emulate settings akin to deterministic attacks but consider different $\epsilon$ values ($1$, $5$, and $10$) with $\tau = 30$.
Our evaluations of statistical attacks under DP-protected PSI protocol are illustrated in Fig.\ref{real_data:sta1}. Fig. \ref{real_data:sta1} features the Taobao dataset, processed identically to Fig.\ref{real_data1}. We include prior results from DyPathBlazer and TreeSumExplorer for comparative evaluation, examining positive and negative membership leakage, misclassification rate, and Type I and Type II errors across different subcases. A ROC curve comparison for different types of memberships is shown in Fig.\ref{fig:ROC1}.

Key insights from the figures are summarized as follows:
Without DP protection, TreeSumExplorer is generally more efficient than other attack algorithms, as the fewer combination possibilities in matching the returned SUM significantly narrow the targeted individuals' membership.

The statistical attack outperforms DyPathBlazer for small $\tau$. For these lower values, the statistical attack algorithm infers more individual memberships, as its looser stopping criterion surpasses DyPathBlazer in efficiency at the cost of some accuracy. However, as $\tau$ increases, dynamic programming algorithms reveal their superiority in optimally dividing the input set, thus maximizing returns.

With DP protection, attack efficiency is diminished, with the reduction corresponding to the DP mechanism's strength ($\epsilon$). DP-induced randomness makes updated posteriors less accurate, thereby decreasing inference accuracy.
The error rate in DP-protected mechanisms rises with increasing $\epsilon$.
The misclassification rate first rises and then falls as $\tau$ increases. Initially, fewer inferred memberships result in a lower error rate. As the number of inferred members grows, the error rate consequently increases. Eventually, as $\tau$ increases, the randomness introduced by DP is mitigated (due to consecutive queries weakening DP), rendering subsequent inferences more accurate.



\section{Conclusion  {\& Future Works}}
 {Private Set Intersection (PSI) protocols that reveal the size of the intersection may unintentionally disclose membership information regarding each parties' sets. While this doesn't directly breach the intended security assurance of PSI, which is to maintain the confidentiality of each party's input set, such PSI protocols can divulge extra details about whether members of one set are part of the other set or not. }

 {In this study, we have delved into the realm of anonymity assessment frameworks specifically designed for intersection-size revealing PSI protocols. Our exploration has led to the development of two innovative strategies for deducing individual memberships within the intersecting set. These strategies include a deterministic attack algorithm supported by dynamic programming, which offers a theoretical performance guarantee and is further enhanced through the incorporation of side information. Additionally, we propose a statistical attack method based on Bayesian principles derived from active learning, which can augment information leakage with minimal compromise to accuracy.}
We also demonstrate, through real-world data, that our proposed methodology exhibits superior performance when compared to the most relevant prior research.

{Given the de-anonymization concerns associated with PSI protocols discussed earlier, there is a pressing need for an innovative privacy-enhanced PSI protocol. This protocol should aim to minimize privacy leakage in situations where two parties must compute intersection-related statistics from their confidential datasets.

More importantly, when considering real-world applications, the demand for multi-ID PSI is frequently encountered. In this context, revealing the intersection size for each ID match can potentially expose significantly more information compared to the single-ID scenario. Effectively mitigating membership leakage in such cases presents a more formidable challenge and remains an open problem, especially when factoring in practical constraints related to communication and computation overhead in the implementation of such a PSI system.  }




\bibliographystyle{plain}
\bibliography{ref}



%



\end{document}